%% file: main.tex
\def\draft{0}

\documentclass[11pt,letterpaper]{article}
\usepackage[margin=1in]{geometry}
\usepackage[T1]{fontenc}
\usepackage[utf8]{inputenc}
\usepackage{lmodern}
\usepackage{microtype}
\usepackage{mathtools}
\usepackage{amsmath,amssymb,amsthm,amsfonts}
\usepackage{bm}
\usepackage{bbm}
\usepackage{xcolor}
\usepackage{overpic}
\usepackage{physics}
\usepackage[colorlinks = true, 
citecolor = blue,
linkcolor = blue,
urlcolor = blue]{hyperref}
\usepackage{tikz} 
\usetikzlibrary{quantikz2} 
\usepackage[table]{xcolor}
\usepackage{tikz-cd}
\usepackage{ytableau}
\usepackage{graphicx}
\usepackage{xspace}

\input{macro}

\allowdisplaybreaks[1]

\begin{document}

\title{Need for Coherent Access in Constructing
Quantum Cryptography
}

\author{
  Minki Hhan \quad Changhun Oh \quad Vaughn Sohn\\[0.5em]
  {\small KAIST, Daejeon, Korea}\\[0.3em]
  {\small {minkihhan@kaist.ac.kr},
    {changhun0218@gmail.com},
    {webb41@kaist.ac.kr}}
}


\date{\today}

\maketitle

\begin{abstract}    
    We construct quantum oracles relative to which quantum-secure one-way functions (OWFs) exist but pseudorandom states (PRSs) with superlogarithmic output length do not.
    At first glance, this appears to contradict the known black-box constructions of PRS generators from quantum-secure OWFs.
    The distinction lies in the access model to the oracles; our oracle separation uses \emph{classical-accessible} random oracles that can be accessed only classically even by quantum algorithms.
    In fact, our impossibility of PRSs applies to \emph{any} classically accessible classical oracle in place of the random oracle, while keeping the other oracle component unchanged, showing the need for coherent access in constructing PRSs.

    We further show that logarithmic output length pseudorandom function-like states (PRFSs) exist relative to our oracles, giving an oracle separation between classically accessible logarithmic length PRFSs and superlogarithmic length PRSs.
    This shows that fully black-box PRS length extension from logarithmic to superlogarithmic output length must use coherent access to the underlying short PRS.

\end{abstract}
	\clearpage
	\newpage
	\setcounter{tocdepth}{2}
	\tableofcontents
	\newpage

\pagebreak

\section{Introduction}
Finding minimal computational assumptions is a fundamental task in cryptography.
In classical cryptography, one-way functions~(\OWF) are necessary and sufficient for many basic primitives, including pseudorandom generators~(\textsf{PRGs}), pseudorandom functions~(\textsf{PRFs}), and secret-key encryption~\cite{IL89,HILL99,GGM86,Gol04}.

Quantum cryptography has introduced fundamentally new primitives formulated by quantum states and unitaries.
Pseudorandom state~(\PRS) generators are analogous to \textsf{PRGs}: a short classical key specifies an efficiently generated state whose polynomially many copies are computationally indistinguishable from the same number of copies of a Haar-random pure state~\cite{ji_Pseudorandom_2018}.
By default, we consider the length of \PRS to be superlogarithmic length---the logarithmic length will be discussed later.
Similarly, pseudorandom unitaries (\textsf{PRU}) are considered as a quantum analogue of \textsf{PRFs}, which are computationally indistinguishable from Haar random unitaries.
These primitives and their variants are useful for constructing cryptographic schemes such as commitments and encryption~\cite{MY22,AQY22}.

It is well known that both \PRS and \textsf{PRU} can be constructed from quantum-secure \OWF~\cite{ji_Pseudorandom_2018, MH25}. 
At the same time, they are unlikely to imply \OWF because of the oracle separations~\cite{Kre21, Kretschmer_2025}. 
Hence, these quantum primitives are considered weaker cryptographic assumptions than \OWF.

Several constructions of these primitives from \OWF have been suggested for \PRS \cite{BS19,BS20,ABB+24,GB23,JMW23} and {\sf PRU} \cite{MPSY24,FLM+26} together with new constructions and proof techniques.
However, all of these constructions rely heavily on coherent access to the underlying classical primitives.
For example, the standard binary-phase \PRS construction prepares an $m$-qubit state by evaluating a \textsf{PRF} $F_k$ on a superposition of inputs:
\begin{align}
    \ket{\phi_k}
    = 2^{-m/2}\sum_{x\in\{0,1\}^{m}}(-1)^{F_k(x)}\ket{x},
\end{align}
and coherent access is essential to this construction.
Such coherent access requires preserving quantum coherence throughout the evaluation of the underlying classical primitive and may require additional ancilla qubits as well as extra time for uncomputation. 
This could be demanding in practice, particularly on near-term quantum devices, and it is also ironic that realizing potentially weaker primitives may be difficult in practice.

This naturally raises the following question:
\begin{center}
    \emph{Can PRS, or quantum cryptography in general, be constructed from classical primitives without coherent access?}
\end{center}

\subsection{Our results}
We give a negative answer to the above question.
In other words, we show that there exists a quantum channel oracle world in which \OWF exist, but \PRS do not.
This separation result naturally implies that fully black-box constructions of \PRS from \OWF are impossible without coherent access.

\begin{theorem}[Informal version of Theorems~\ref{thm:main-1} and~\ref{thm:main-2}]
    There exists a quantum channel oracle $\mathcal{O}$ relative to which classically computable, quantum-secure \OWF exist, but \PRS with output length $m(\lambda)=\omega(\log\lambda)$ do not exist.
\end{theorem}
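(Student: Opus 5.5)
The plan is to take the oracle $\mathcal{O}=(\mathcal{R},\mathcal{C})$. Here $\mathcal{R}$ is a random function $\{0,1\}^*\to\{0,1\}^*$ that any algorithm, quantum or not, may query only on classical inputs. Concretely, a query register is measured in the computational basis before it is answered, which makes $\mathcal{R}$ a quantum channel rather than a unitary. The second component $\mathcal{C}$ is a "state-learning / breaking" oracle: on input (many copies of) a quantum state together with the description of a circuit, it performs an inefficient but \emph{oracle-independent} task that breaks any candidate \PRS. A natural choice is a $\mathsf{QPSPACE}$-style or unbounded-shadow-tomography oracle: it takes polynomially many copies of an $m$-qubit state and outputs a classical hypothesis for it, for example a bit deciding whether the state is close to the image of some efficient generator with classical $\mathcal{R}$-access. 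The proof then has two independent parts: OWFs exist relative to $\mathcal{O}$, and \PRS do not.

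\textbf{OWF part.} Let $f(x)=\mathcal{R}(x)$ restricted to length $\lambda$, with length-doubling so that $f$ is injective with high probability. The function is classically computable because evaluating it needs only classical access. For security, since $\mathcal{R}$ can only be queried classically, a quantum adversary's view of $\mathcal{R}$ consists of polynomially many classical query/answer pairs. The only concern is that $\mathcal{C}$ leaks information about $\mathcal{R}$. I will design $\mathcal{C}$ so that it is independent of $\mathcal{R}$: it never queries $\mathcal{R}$ itself, and any generator description handed to it is simulated using the classical transcript supplied by the caller. Then a standard lazy-sampling argument applies. Inverting $f(x^*)$ requires querying $\mathcal{R}$ on $x^*$, and each classical query hits $x^*$ with probability $\mathrm{poly}/2^\lambda$, regardless of the unbounded power of $\mathcal{C}$.

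\textbf{No-PRS part (main obstacle).} Fix any generator $G^{\mathcal{O}}$ outputting $\ket{\phi_k}$ on $m=\omega(\log\lambda)$ qubits. The key step is to show that classical access to $\mathcal{R}$ effectively "classicalizes" the oracle. The transcript of $G$'s $\mathcal{R}$-queries is a classical string that is measured during generation, so $\ket{\phi_k}$ is in fact a mixture over transcripts. I would show that for each $k$, with high probability over $\mathcal{R}$, the generated state is close to a state that depends on only polynomially many \emph{heavy} oracle positions. These positions can be learned by the distinguisher by running $G$ on random keys, or estimated using the computational power of $\mathcal{C}$. The distinguisher then uses $\mathcal{C}$ as an unbounded-time, query-efficient tool: it applies a swap/collision-type test, or a test of low-rank versus Haar, to many copies. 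The test checks whether the input lies in the set $\{\ket{\phi_k}\}_k$ as simulated with the learned partial oracle. This set has size $2^\lambda$, whereas Haar states on $\omega(\log\lambda)$ qubits are far from any fixed set of $2^{\lambda}$ states once enough copies are used. This is a standard counting argument, e.g.\ via the projector onto $\mathrm{span}\{\ket{\phi_k}^{\otimes t}\}$, which has dimension at most $2^\lambda$ while the symmetric subspace has dimension $\binom{2^m+t-1}{t}$.

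The hardest step is arguing that a simulated partial oracle suffices, so that $\mathcal{C}$ does not need $\mathcal{R}$ coherently. Because $G$'s access is classical, I would first try a "heavy-query learning" lemma in the style of Impagliazzo–Rudich/Barak–Mahmoody. Its conclusion is that, after learning the polynomially many frequently queried points, the remaining queries of $G$ are to fresh random values. These can then be resampled without changing the output distribution by more than $1/\mathrm{poly}$. This turns the problem into breaking an oracle-free generator (given the learned list). Because $\mathcal{C}$ ignores $\mathcal{R}$, the same argument works for \emph{any} classically accessible classical oracle in place of $\mathcal{R}$, as the abstract claims.
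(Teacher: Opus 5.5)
\textbf{There is a genuine gap in the no-PRS part.} Your overall architecture matches the paper's: a classically accessible random oracle plus an $\mathcal{R}$-independent breaking oracle, with one-wayness from query counting. Your OWF part is fine.

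The gap is the heavy-query learning and resampling step, which fails. For the challenge key $k$, the output $\rho_k$ can depend essentially on oracle values at key-specific positions. For example, let $G(1^\lambda,k)$ query $s=R_\lambda(k)$ and then run an oracle-free circuit preparing a state $\ket{\phi_s}$, with states for independent $s$ generically nearly orthogonal.
\begin{itemize}
\item Over a random key, each fixed position is queried with probability $2^{-\lambda}$, so $k$ is never heavy. The distinguisher cannot learn $R_\lambda(k)$, and your $\mathcal{C}$, being independent of $\mathcal{R}$, cannot help.
\item Resampling the light positions preserves only the joint distribution of $(k,\rho_k)$ over random $k$ and $R$, not the state for the actual key. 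Under the resampled oracle, your simulated family $\{\ket{\phi_{k'}}\}_{k'}$ consists of states for $2^\lambda$ fresh seeds. The true challenge state generically has negligible overlap with the span of their $t$-fold copies, so your test rejects the real PRS just as it rejects Haar states.
\item Impagliazzo--Rudich-type lemmas work because the attacker conditions on a public transcript. Here there is none.
\item The resampling also relies on the randomness of $\mathcal{R}$. So it could not give the any-classical-oracle claim you end with, whereas the paper's Theorem~\ref{thm:main-2} holds for every fixed $R$.
\end{itemize}

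\textbf{The missing idea is that nothing about $\mathcal{R}$ needs to be learned.} Because access is classical, each execution branch of $G^{\mathcal O}(1^\lambda,k)$ is fixed by a polynomial-length record $d$. This record contains the key, intermediate outcomes, measured query inputs, measured $\mathsf{QPSPACE}$ commands, and \emph{proposed} oracle replies. Hardwiring the replies gives an oracle-free circuit $U_d$ that reproduces the branch's weight and state upon postselection (Lemma~\ref{lem:oracle_free}).

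Enumerating all $d$, including all reply values, gives $2^L$ candidates rather than $2^\lambda$. The counting must therefore use $L$ copies: $2^L\binom{2^m+L-1}{L}^{-1}\le(2L/2^m)^L$, which is negligible for $m=\omega(\log\lambda)$.

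You must also treat $\rho_k$ as a mixture over branches, not a pure $\ket{\phi_k}$. Either its purity is low, which SWAP tests catch, or some branch $\ket{\psi_i}$ has $\langle\psi_i|\rho_k|\psi_i\rangle\ge 1-2\delta$ and weight $p_i\ge 2^{-L-1}$ (Lemma~\ref{lem:dichotomy}). With a truly unbounded $\mathcal{R}$-independent $\mathcal{C}$, you could then project onto $\mathrm{span}\{\ket{\psi_d}^{\otimes L}\}_d$, which has rank at most $2^L$. With $\mathsf{QPSPACE}$, which only runs polynomial-width unitaries, the paper instead boosts the possibly exponentially small $p_i$ by fixed-point amplification and combines the $2^L$ candidates with the quantum OR lemma. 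Your description of $\mathcal{C}$ is too loose to tell which of these you would implement.
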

The constructions from \OWF to the other primitives in minicrypt \cite{HILL99,GGM86,Nao91} do not involve coherent access; thus the standard classical cryptographic primitives all exist relative to this oracle, including \textsf{PRGs}, \textsf{PRFs} with classical queries, secret-key encryption, message authentication codes, and commitments (Corollary~\ref{cor:classical}).

The main building blocks of our oracle are classically accessible random oracles~\cite{BDF+11,AF22}, to which even quantum algorithms can make only classical queries.
While separations relative to unitary or quantum-accessible classical oracles may better reflect standard-model computations, separations in the classically accessible model are not directly comparable to them in general: coherent access increases the power of both honest algorithms and adversaries~\cite{AF22}.
\footnote{See also \cite{LLPY24} where the classical-accessible oracle was used in the complexity separation.} Furthermore, this shows an exact difficulty of constructing one from another as discussed below.

In fact, we can show the impossibility of \PRS with \emph{any} classical-accessible classical oracles\footnote{To our knowledge, a similar result for any classical oracle was only discussed in \cite{QZ26}.} instead of random oracles while keeping the other components. 
Furthermore, the impossibility holds even when the \PRS generation algorithm is allowed to output slightly mixed states. Allowing mixed-state outputs strengthens our impossibility result and better reflects realistic implementations, where noise or entanglement with discarded ancillas may prevent the output from being exactly pure.


This result shows that coherent access plays a crucial role in constructing \PRS or many quantum cryptographic primitives, especially with large quantum states, from classical ones.
To fully understand the assumptions and scope of the statement that \PRS can be constructed from \OWF, these access requirements must be considered alongside the security properties of the underlying primitives.
From this perspective, our result suggests that the relative strength of minicrypt and microcrypt depends on the oracle access model.
That is, in this model without coherent access, \emph{minicrypt does not suffice for microcrypt!}


\paragraph{\PRS length extension and more.} 
It is now well understood that the length of quantum primitives matters much more than in classical cryptography. While {\sf PRGs} admit a length extension by iterating them, controlling the length of \PRS appears to be much more difficult.\footnote{We note an orthogonal direction to one discussed in this paper: The scalable \PRS and {\sf PRFS} \cite{BS20,BCJZ26} were suggested to decouple the length and security parameter. We remark that the construction of scalable {\sf PRU} is still open \cite{BY26}.} For example, the output length of \PRS is impossible to shrink in a black-box way~\cite{BM24,CGG24}.

The opposite direction of length extension was studied in many works~\cite{LV24,ALY24,BNY25} for special cases, e.g., allowing errors with tomography or quantum sampling. 
It is shown that quantum access to the {\sf PRFS} gives long \PRS, but the classical-accessible case is difficult \cite{BEHMV26}: The impossibility is shown for the long \PRS generation which does not use an ancillary qubit and only uses nonadaptive classical queries to the short {\sf PRFS}.

Our result removes these conditions, showing the following oracle separation.
\begin{corollary}
    With the same oracles, classical-accessible, adaptively secure log-length {\sf PRFS} exists, but \PRS with superlogarithmic output length does not.
\end{corollary}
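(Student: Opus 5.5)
The corollary has two halves, and I would prove them separately. The impossibility half needs no new work: it is Theorem~\ref{thm:main-2}, which rules out \PRS of output length $\omega(\log\lambda)$ relative to $\mathcal{O}$. That theorem applies to any classically accessible classical oracle together with the same quantum component. So the only new content is the existence half: building a classical-accessible, adaptively secure {\sf PRFS} with output length $O(\log\lambda)$ relative to $\mathcal{O}$.

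For existence, I would use the classically accessible random oracle $H$ inside $\mathcal{O}$ to produce the amplitudes of the short state directly. On key $k$ and classical input $x$, the generator makes the $2^{m}=\mathrm{poly}(\lambda)$ classical queries $H(k,x,y)$ for $y\in\{0,1\}^{m}$. It reads off random phases $(-1)^{H(k,x,y)}$, or finer-grained complex amplitudes from longer outputs. It then prepares
\[
\ket{\psi_{k,x}} = 2^{-m/2}\sum_{y\in\{0,1\}^{m}} (-1)^{H(k,x,y)}\ket{y}.
\]
Because $m=O(\log\lambda)$, this preparation is efficient by a generic state-preparation circuit, and it needs no coherent access to $H$. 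To get states that approximate Haar states rather than binary-phase states, I would take $H$ outputs of $\mathrm{poly}(\lambda)$ bits and use them to sample a Haar-random vector up to precision $2^{-\mathrm{poly}}$.

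Security against adaptive adversaries would go through three hybrids.
\begin{enumerate}
\item Since the adversary does not know $k$ and $H$ is only classically accessible, the probability that it ever queries a point $(k,\cdot,\cdot)$ is at most $q\cdot 2^{-\lambda}$. So I can replace $H(k,\cdot,\cdot)$ by an independent random function that only the generator sees.
\item Each distinct input $x$ now yields an independent state drawn from the random-phase ensemble. Copies for repeated $x$ are consistent, which matches the {\sf PRFS} ideal world where the same Haar state is returned for a repeated input.
\item I replace each independent random-phase state (or discretized-Haar state), given polynomially many copies $t$, by a Haar state.
\end{enumerate}

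The main obstacle is the third step, because $m$ is only logarithmic. The standard random-phase versus Haar bound is $O(t^2/2^{m})$, and this is not negligible when $2^{m}=\mathrm{poly}$. I would therefore avoid binary phases and use the finely discretized Haar sampling, so that each state is statistically $2^{-\mathrm{poly}}$-close to Haar and the hybrid loss is negligible.

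I also have to check that the quantum component of $\mathcal{O}$, which is what kills long \PRS, does not help the adversary distinguish. I expect that component to be defined independently of $H$ (for instance, depending only on the input state), so it cannot leak $H(k,\cdot)$ without classical queries. The first hybrid then still holds with the adversary's access to the rest of $\mathcal{O}$ simulated.
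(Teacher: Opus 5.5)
Your proposal is correct and is essentially the paper's argument. The impossibility half is Theorem~\ref{thm:main-2}, which holds for every fixed $R$. The existence half uses concatenated classical random-oracle outputs to sample, for each input, a statistically near-Haar $O(\log\lambda)$-qubit state as in~\cite{BS20}; your hybrids, and your correct remark that binary phases are insufficient at logarithmic length, fill in what the paper leaves as a one-line sketch.

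One step should be made explicit. Your hybrids bound the advantage on average over $R$, and you need to upgrade this to security for almost every $R$, as Theorem~\ref{thm:main-1} does for inversion. The cleanest way is to key the construction with the classical-query PRF of Corollary~\ref{cor:classical}, whose security already holds for each fixed good $R$.
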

Again, the \PRS generation algorithm can be noisy as before, use arbitrary polynomially many ancilla qubits, and make adaptive queries to the underlying {\sf PRFS}. 
A similar separation for single-copy-secure \PRS was recently shown in \cite{CCL+26}. 
This shows that the importance of coherent access holds even when the underlying primitives are quantum.

The proof is straightforward, as the output of classical-accessible RO, which can be arbitrarily long polynomially by concatenating them, can be used to generate a single logarithmic-length statistically secure random state \cite{BS20}. The same strategy also shows the impossibility of constructing long \PRS from product single-qubit pseudorandom unitaries \cite{HMX26} as well.

\subsection{Technical overview}

\paragraph{CAROM and $\mathsf{QPSPACE}$ oracle.}
We adopt two oracle components for different purposes.
A random oracle $\mathcal R$ in the classical-accessible random oracle model~(CAROM)~\cite{BDF+11,AF22} provides \OWF, while an independent \textsf{QPSPACE} oracle, following the approach of~\cite{Chen_Power_2025,BMMMY25}, is used to break \PRS.
Here, $\mathcal R$ provides access to a family of random functions $R = \{R_\lambda\}_{\lambda\in\mathbb N}$.
The standard way to give a quantum algorithm access to a classical function is through coherent unitary access.
In contrast, $\mathcal R$ is a quantum channel that first measures the query register in the computational basis and then XORs the function value into the response register.
The \textsf{QPSPACE} oracle supports polynomial-width, oracle-free unitary computations; in particular, its internal computations cannot query $\mathcal R$.
Below, we explain how these two components yield the separation.

\subsubsection{Construction of \OWF}
By the definition of CAROM, each function $R_\lambda$ can be evaluated with a single classical query to $\mathcal R$.
The usual information-theoretic bound for random-function inversion gives negligible success probability for polynomially many classical queries, regardless of the computation performed between queries.
This bound remains valid when the adversary also has access to the independent \textsf{QPSPACE} oracle.
Thus, the family $R$ implements \OWF relative to the full oracle.


\subsubsection{Oracle separation of \PRS from \OWF}
The more delicate part is breaking \PRS.
The original definition of \PRS considers only pure-state outputs~\cite{ji_Pseudorandom_2018}.
We consider a more general setting where the output of a QPT generator may be mixed, but QPT adversaries must still have only negligible advantage in distinguishing copies of its output from copies of a Haar-random pure state. The purity test imposes a near-pure state anyway.

Our starting observation is a more fine-grained version of this observation: 
a simple dichotomy for general mixed states.\footnote{A similar observation for a different setting was recently used in \cite{Hhan26}.}
Consider a pure-state decomposition $\rho=\sum_{i=1}^{N}p_i\ket{\psi_i}\bra{\psi_i}$.
Informally, one of the following holds:
\begin{itemize}
    \item (Case 1) The state $\rho$ has noticeably low purity.
    \item (Case 2) The state $\rho$ is nearly pure and has large overlap with some component $\ket{\psi_i}$ whose weight satisfies $p_i \geq \frac{1}{2N}$.
\end{itemize}
Intuitively, if the state is nearly pure, components with large overlap carry a substantial fraction of the total probability.
Since there are only $N$ components, at least one of them must have sufficiently large weight.
See Lemma~\ref{lem:dichotomy} for the formal statement.

This dichotomy is especially useful when we apply this to the efficient algorithm's output state:
To describe the execution branches, retain discarded registers and defer all measurements to the end, including a final measurement of all nonoutput registers.
Writing $U_A$ for the resulting unitary dilation, the state before measurement must be
\begin{align}
    U_A\ket{0} = \sum_{i} \sqrt{p_i} \ket{\psi_i}\ket{{\sf d}_i},
\end{align}
so that each component state can be written as
\begin{align}
    \sqrt{p_i}\ket{\psi_i}\ket{{\sf d}_i} = (I\otimes \ketbra{{\sf d}_i}) U_A \ket 0.
\end{align}
That is, the original algorithm gives another algorithm that applies a unitary $U_i$ followed by a binary measurement that accepts with probability $p_i$, and in that case outputs the state $\ket{\psi_i}$. In other words, an efficiently generated mixed state admits a decomposition of pure states that can be generated by some efficient algorithms (albeit with small success probability).

Note that $U_i$ specifies all the queries to the oracle, as ${\sf d}_i$ includes all query inputs that must be measured anyway.\footnote{In the actual proof, we do something more complicated because of the adaptivity. We discuss this point later.}
Furthermore, $\log N$ is bounded by the number of binary measurements made by the algorithm, including the oracle measurements and the final measurements of nonoutput registers, and is therefore polynomially bounded.


We apply this observation to the generator's output state.
This allows us to construct a two-stage distinguisher. 
If the candidate \PRS has low purity, repeated SWAP tests detect an antisymmetric outcome with constant probability, whereas copies of a Haar-random pure state never produce this outcome.

In the other case, the dichotomy guarantees a pure branch $|\psi_i\rangle$ with weight $p_i\ge \frac{1}{2N}$ and large overlap with the \PRS. 
This branch provides a candidate for an overlap test. We combine the tests for all candidate branches using the quantum OR lemma~\cite{Harrow_Sequential_2017}.

We use our \textsf{QPSPACE} oracle to run a unitary implementation of the OR test, following~\cite[Appendix~A]{Chen_Power_2025}.
Two obstacles remain before we can apply the OR test: (i) the useful branch may have exponentially small preparation probability, and (ii) its preparation circuit $U_i$ accesses the random oracle $\mathcal R$, which must be removed for the \textsf{QPSPACE} implementation.\footnote{The access to \textsf{QPSPACE} can be dealt with using the \textsf{QPSPACE} oracle, as discussed in~\cite{Chen_Power_2025}.}

\paragraph{How to increase the branch preparation probability.}
To overcome the first obstacle, we amplify the probability of preparing a candidate branch while preserving its normalized output state.
For now, suppose that each candidate execution index $i$ has an associated unitary preparation circuit $U_i$ without accessing $\mathcal R$; we explain how to construct these circuits without $\mathcal R$ below.
The circuit $U_i$ prepares the branch state $\ket{\psi_i}$ with probability $p_i$.
For the branch guaranteed by Case 2, applying $U_i^\dagger$ to the challenge state $\rho$ padded with zero ancillas and accepting when all registers are zero gives 
\begin{align}
    \Pr[\mathrm{accept}]=p_i\langle\psi_i|\rho|\psi_i\rangle \ge \frac{\langle\psi_i|\rho|\psi_i\rangle}{2N}
\end{align}
Even when the overlap is close to one, the factor $p_i$ may be exponentially small, so this test does not meet the completeness requirement of the quantum OR lemma.

We remove this small prefactor using Grover's fixed-point amplitude amplification~\cite{Grover_fixed_2005}.
Although the exponentially small preparation probability requires an exponentially large amplified circuit, we can show that this circuit can be printed in exponential time. 
Therefore, ignoring the random oracle part, the OR test detects the existence of a candidate with sufficiently large overlap. It remains to remove the random oracle from the circuit $U_i$.

\paragraph{How to prepare each branch without $\mathcal R$.}
We now address the second obstacle.
Our idea is simple: we enumerate all possible answers to the random oracle queries, which can be described by classical strings thanks to the classically accessible condition.
It turns out that we need to be careful about the measurements, as different measurement outcomes may induce different classical random oracle queries.

For example, suppose a generator prepares $\frac{|{0^\lambda}\rangle+\ket{k}}{\sqrt{2}}$ for $k\neq 0^\lambda$, measures this register, queries $\mathcal R$ on the outcome, and uses the reply to control a unitary that prepares its pure output.
This situation introduces two different descriptions of the oracle query inputs; if the measurement outcome was $0^\lambda$ and the oracle answer was $y$, the corresponding description may be $(0^\lambda,y)$. Otherwise, if the measurement outcome was $k$ and the oracle answer was $z$, the description becomes $(k,z)$. 
For each branch index $i$, we enumerate \textit{all possible oracle-query records} $d$.
Recall $U_i$ already specifies all the query inputs. Therefore, given the record $d$, 
we consider the unitary $U_i$ with the hardwired oracle answers in $d$, making $U_{i,d}$. 
This gives a family of oracle-free candidate circuits containing, for every actual branch, a circuit that reproduces its output state and preparation probability upon postselection, so that we can apply the strategy above.

We stress that the actual situation is a little more involved.
The main problem is that we first fix $i$, including the oracle query inputs.
These may depend on the oracle replies specified when we choose $d$.
Lemma~\ref{lem:oracle_free} shows that the resulting oracle-free candidate family contains a circuit reproducing each actual branch.





\paragraph{Combining the tests.}
Let $L=L(\lambda)\geq\lambda$ be a polynomial bound on the candidate description length, so there are at most $2^L$ candidates.
We use $L$ copies of the challenge state for the OR test.
For each candidate, we repeat the inverse-circuit test above on $L$ challenge copies, using its amplified preparation circuit,
and accept only if all $L$ tests accept.
Choosing the purity threshold and amplification accuracy appropriately ensures that some candidate test accepts with probability greater than $1/2$ in Case 2.
Together, the purity tests and the OR test therefore accept \PRS with constant probability.

In contrast, for a Haar-random challenge, the Haar-moment identity gives $\mathbb E_{|\phi\rangle\leftarrow\mu_m} |\langle\psi|\phi\rangle|^{2L} = \binom{2^m+L-1}{L}^{-1}$ for every fixed pure state $|\psi\rangle$. Here $\mu_m$ is the Haar measure on $m$-qubit pure states.
Since each candidate test has average acceptance probability at most this quantity, the quantum OR lemma implies that the OR test accepts a Haar-random challenge with negligible probability when $m(\lambda)=\omega(\log\lambda)$.

All candidate preparation circuits, amplified preparation circuits, and the OR test circuit used in this procedure have polynomial width and can be printed in exponential time by a classical Turing machine with a polynomial-length description that never queries $\mathcal R$.
The adversary can therefore execute the OR test with one classical-command query to $\mathsf{QPSPACE}$.
Combining this test with the initial purity tests gives, for each fixed generator, a QPT distinguisher using polynomially many copies with constant distinguishing advantage.
Together with the existence of \OWF established above, this gives an oracle separation: relative to $\mathcal O=(\mathcal R,\mathsf{QPSPACE})$, quantum-secure \OWF exist, but \PRS with superlogarithmic output length do not.


\section{Preliminaries}

\paragraph{Notation.}
We denote the security parameter by $\lambda \in \mathbb{N}$.
We use $[N]$ for $\{1, \ldots, N\}$.
A function $f : \mathbb{N} \to \mathbb{R}_{\geq 0}$ is \emph{negligible} if, for every positive polynomial $p$, there exists $\lambda_0 \in \mathbb{N}$ such that $f(\lambda) \leq 1/p(\lambda)$ for every $\lambda \geq \lambda_0$.
We use $\mathrm{negl}(\lambda)$ to denote a negligible function of $\lambda$.
For a finite set $S$, $x \leftarrow S$ means $x$ is sampled uniformly at random from $S$.
More generally, $x \leftarrow \mathcal{D}$ means $x$ is sampled according to the distribution $\mathcal{D}$.
We write $p$ as shorthand for any function $p(\lambda)$ in the security parameter $\lambda$ when no confusion can arise.

\subsection{Cryptographic primitives}\label{subsec:crypto}

We first recall the definitions of the classical and quantum cryptographic primitives considered in this work.
Throughout, all algorithms and adversaries are uniform, and computational security is defined against quantum polynomial-time (QPT) adversaries.
All length functions below are efficiently computable and polynomially bounded.

Our first main primitive is one-way functions, whose existence is necessary for essentially all of classical cryptography~\cite{IL89}.
\begin{definition}[One-way functions]
    Let $\lambda\in\mathbb N$ and $n(\lambda)$ be the input and output lengths, respectively.
    A function family $f = \{ f_\lambda : \{0,1\}^{\lambda} \to \{0,1\}^{n} \}_{\lambda \in \mathbb{N}}$ is a \textbf{(quantum-secure) one-way function family} if the following two conditions hold:
    \begin{enumerate}
        \item (Efficient computation). There is a deterministic polynomial-time classical algorithm that computes $f_\lambda(x)$ on input $(1^\lambda,x)$, where $x\in\{0,1\}^{\lambda}$.
        \item (Hard to invert). For every QPT adversary $\mathcal{A}$,
        \begin{align} 
            \Pr_{ x \leftarrow \{0,1\}^{\lambda}} \left[ f_\lambda(x) = f_\lambda \big(\mathcal{A}(1^\lambda,f_\lambda(x)) \big) \right] \le \mathrm{negl}(\lambda). 
        \end{align}
    \end{enumerate}
\end{definition}

We also consider the following classical cryptographic primitives.
A pseudorandom generator expands a uniformly random seed into a longer string that is computationally indistinguishable from a uniformly random string.
\begin{definition}[Pseudorandom generators]\label{def:prg}
    Let $\lambda \in \mathbb{N}$ and $\ell(\lambda)>\lambda$ be the seed and output lengths, respectively.
    A function family $G=\{G_\lambda:\{0,1\}^{\lambda}\to\{0,1\}^{\ell}\}_{\lambda\in\mathbb N}$ is a \textbf{(quantum-secure) pseudorandom generator} if the following two conditions hold:
    \begin{enumerate}
        \item (Efficient computation). There is a deterministic polynomial-time classical algorithm that computes $G_\lambda(s)$ on input $(1^\lambda,s)$, where $s\in\{0,1\}^{\lambda}$.
        \item (Pseudorandomness). For every QPT adversary $\mathcal A$,
        \begin{align}
            \left|\Pr_{s\leftarrow\{0,1\}^{\lambda}}
            [\mathcal A(1^\lambda,G_\lambda(s))=1]
            -\Pr_{u\leftarrow\{0,1\}^{\ell}}
            [\mathcal A(1^\lambda,u)=1]\right|
            \leq\mathrm{negl}(\lambda).
        \end{align}
    \end{enumerate}
\end{definition}

A pseudorandom function family consists of keyed functions that are computationally indistinguishable from a uniformly random function when the key is chosen uniformly at random.
Here, we define \textsf{PRFs} with classical queries, allowing the adversary to perform quantum computation but restricting its queries to the challenge function to be classical~\cite{Zha12,Zha25}.
\begin{definition}[Pseudorandom functions with classical queries]\label{def:prf}
    Let $d(\lambda)$ and $\ell(\lambda)$ be the input and output lengths, respectively.
    A keyed function family $F=\{F_k:\{0,1\}^{d}\to\{0,1\}^{\ell}\}_{k\in\{0,1\}^{\lambda}}$ is a \textbf{(quantum-secure) pseudorandom function family with classical queries} if the following two conditions hold:
    \begin{enumerate}
        \item (Efficient computation). There is a deterministic polynomial-time classical algorithm that computes $F_k(x)$ on input $(1^\lambda,k,x)$, where $k\in\{0,1\}^{\lambda}$ and $x\in\{0,1\}^{d(\lambda)}$.
        \item (Pseudorandomness). For every QPT adversary $\mathcal A$ making polynomially many adaptive classical queries to its challenge function,
        \begin{align}
            \left|\Pr_{k\leftarrow\{0,1\}^{\lambda}}
            [\mathcal A^{F_k}(1^\lambda)=1]
            -\Pr_{h\leftarrow  \mathrm{Func}(d,\ell)}
            [\mathcal A^{h}(1^\lambda)=1]\right|
            \leq\mathrm{negl}(\lambda),
        \end{align}
        where $\mathrm{Func}(d,\ell)$ is the set of all functions from $\{0,1\}^{d}$ to $\{0,1\}^{\ell}$.
    \end{enumerate}
\end{definition}

We also use secret-key encryption~(\textsf{SKE}) and message authentication codes~(\textsf{MAC}) with their standard correctness requirements and, respectively, IND-CPA and EUF-CMA security against QPT adversaries making classical queries.
For interactive bit commitments, we require computational hiding against QPT receivers and statistical binding against arbitrary quantum senders, with all messages and openings classical.
For the classical scheme used here, binding means that the commitment transcript admits valid openings to both bits only with negligible probability.
We refer to~\cite{Gol01,Gol04,Nao91} for formal definitions.

Our second main primitive is pseudorandom quantum states, whose existence is a potentially weaker assumption for quantum cryptography than the existence of \OWF.
In this work, we consider the more general variant in which the generator output can be a mixed state.
\begin{definition}[Pseudorandom quantum states]
    Let $\lambda \in \mathbb{N}$ be the security parameter, and let $m(\lambda)$ be the number of qubits in the quantum system. A keyed family of quantum states $\{\rho_k \}_{k \in  \{0, 1\}^\lambda}$ is \textbf{pseudorandom} if the following two conditions hold:
    \begin{enumerate}
        \item (Efficient generation). There is a QPT algorithm $G$ that generates the $m$-qubit state $\rho_k$ on input $(1^\lambda,k)$. 
        \item (Pseudorandomness). Any polynomially many copies of $\rho_k$ with the same random $k \in \{0, 1\}^\lambda$ are computationally indistinguishable from the same number of copies of a Haar-random state. More precisely, for any QPT adversary $\mathcal{A}$ and any $r\in \mathrm{poly}(\lambda)$, 
        \begin{align}
            \left| \Pr_{k \leftarrow \{0, 1\}^\lambda } [\mathcal{A}(1^\lambda,G(1^\lambda, k)^{\otimes r} ) = 1] - \Pr_{\ket{\psi} \leftarrow \mu_m} [\mathcal{A}(1^\lambda, \ket{\psi}^{\otimes r}) = 1] \right| 
            \le \mathrm{negl}(\lambda),
        \end{align}
        where $\mu_m$ is the Haar measure on $m$-qubit states.
    \end{enumerate}
\end{definition}


\subsection{Quantum property tests}

We will use the following two well-known results for breaking \PRS.

\begin{lemma}[SWAP test and purity test]
    For all states $\rho$ and $\sigma$, the SWAP test outputs $1$ with probability
    $(1 + \Tr(\rho \sigma))/2$. In particular, if $\sigma = \rho$, it is called a purity test and outputs $1$
    with probability $(1 + \Tr(\rho^2))/2$.
\end{lemma}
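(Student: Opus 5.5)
The plan is to analyze the standard SWAP-test circuit directly. The circuit takes a control qubit prepared in $\ket{0}$, applies a Hadamard, then a controlled-SWAP on the two input registers, then a second Hadamard on the control, and finally measures the control. We output $1$ when the outcome is $0$. By linearity of the quantum channel, and convexity in each argument, it suffices to handle product inputs $\rho\otimes\sigma$; we can even reduce to pure states $\ket{\psi}\otimes\ket{\phi}$ and then take convex combinations, since the claimed probability $(1+\Tr(\rho\sigma))/2$ is bilinear in $(\rho,\sigma)$.

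For the pure case I will track the state through the circuit:
\begin{align}
\tfrac12\ket{0}\big(\ket{\psi}\ket{\phi}+\ket{\phi}\ket{\psi}\big)+\tfrac12\ket{1}\big(\ket{\psi}\ket{\phi}-\ket{\phi}\ket{\psi}\big).
\end{align}
The probability of outcome $0$ is the squared norm of the first branch:
\begin{align}
\tfrac14\big(2+2|\langle\psi|\phi\rangle|^2\big)=\tfrac12\big(1+|\langle\psi|\phi\rangle|^2\big).
\end{align}
This matches the claim, because $|\langle\psi|\phi\rangle|^2=\Tr(\ketbra{\psi}\ketbra{\phi})$.

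A cleaner alternative avoids decompositions entirely. The acceptance operator on the input registers is the projector $\Pi_{\mathrm{sym}}=(I+\mathrm{SWAP})/2$, so the acceptance probability is $\Tr\big(\Pi_{\mathrm{sym}}(\rho\otimes\sigma)\big)$. The claim then follows from the swap trick, $\Tr(\mathrm{SWAP}\,(\rho\otimes\sigma))=\Tr(\rho\sigma)$, which is checked on matrix units $\ket{a}\bra{b}\otimes\ket{c}\bra{d}$: both sides equal $\delta_{bc}\delta_{ad}$. The purity statement is the special case $\sigma=\rho$.

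No step is a real obstacle. The only point needing care is justifying the linear extension from pure to mixed inputs, and the projector formulation handles this automatically.
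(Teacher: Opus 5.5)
Your proof is correct, and it is the standard textbook derivation. The paper gives no proof of this lemma, citing it as well known, so there is no different route to compare against.

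Of your two arguments, the projector formulation is the one to keep. It shows the acceptance probability is $\Tr(\Pi_{\mathrm{sym}}\tau)$ for any two-register input $\tau$, which covers mixed inputs without any decomposition. It also directly gives the antisymmetric-outcome probability $(1-\Tr(\rho^2))/2$, which is the quantity the paper's adversary actually uses.

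The only nit concerns your first route. There, $(1+\Tr(\rho\sigma))/2$ is bilinear only after writing $1=\Tr(\rho)\Tr(\sigma)$. This is harmless on density matrices.
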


\begin{lemma}[Quantum OR lemma \cite{Harrow_Sequential_2017}]
    Let $\Pi_1, \ldots, \Pi_N$ be a sequence of projectors, where $\Pi_i$ corresponds to the
    two-outcome measurement $M_i = \{\Pi_i, I - \Pi_i\}$, and let $0 \le \epsilon \le 1/2$ and
    $\delta > 0$. There is a quantum OR test that uses one copy of a state $\rho$ and satisfies
    the following.
    \begin{itemize}
        \item (Case 1) If there exists $i \in [N]$ such that $\Tr(\Pi_i \rho) \ge 1 - \epsilon$,
        then it accepts with probability at least $(1 - \epsilon)^2/7$.
        \item (Case 2) If $\frac{1}{N} \sum_{i \in [N]}\Tr(\Pi_i \rho) \le \delta$, then it accepts with probability at most $4N\delta$.
    \end{itemize}
\end{lemma}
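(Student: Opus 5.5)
We would run the measurements \emph{sequentially} on the single copy of $\rho$, following Harrow, Lin and Montanaro. First, append an ancilla qubit in $\ket{0}$ and replace each $\Pi_i$ by a modified projector $\Pi_i'$ on system plus ancilla. The modification damps how much a rejecting outcome $I-\Pi_i'$ can disturb the state, while keeping $\Tr(\Pi_i'(\rho\otimes\ketbra{0}))$ within a constant factor of $\Tr(\Pi_i\rho)$. Second, choose a uniformly random order (or, equivalently for the analysis, a uniformly random subset of the measurements to skip). Third, apply the measurements $\{\Pi_i',I-\Pi_i'\}$ one after another and accept as soon as any of them accepts. All coins and classical control can be deferred, so the test is a single two-outcome measurement on one copy of $\rho$, as the statement requires.

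\textbf{Soundness (Case 2).} Here we invoke a noncommutative union bound in the form due to Gao. For projectors $Q_j=I-\Pi_j'$, it states
\begin{align}
1-\Tr\bigl(Q_N\cdots Q_1\,\sigma\,Q_1\cdots Q_N\bigr)\le 4\sum_{j}\Tr\bigl((I-Q_j)\sigma\bigr).
\end{align}
The left-hand side is exactly the probability that some measurement in the sequence accepts. Applying the bound with $\sigma=\rho\otimes\ketbra{0}$, and using that each $\Pi_i'$ accepts $\sigma$ with probability at most $\Tr(\Pi_i\rho)$, the acceptance probability is at most $4\sum_i\Tr(\Pi_i\rho)\le 4N\delta$. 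This holds for every fixed order and skip pattern, hence also on average, so this direction is essentially a direct citation.

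\textbf{Completeness (Case 1).} Fix $i^\ast$ with $\Tr(\Pi_{i^\ast}\rho)\ge 1-\epsilon$. If any measurement before $i^\ast$ accepts, we are already done. So condition on all earlier measurements rejecting, and show that the post-rejection state still has large overlap with $\Pi_{i^\ast}'$. The intuition is a dichotomy: each earlier rejection either barely disturbs the state (gentle measurement) or has substantial acceptance probability, in which case the test accepts early. The ancilla damping and the random skipping turn this into a quantitative bound. Averaging over whether each preceding measurement is applied, we track the overlap of the running unnormalized state with the good subspace of $\Pi_{i^\ast}$. The resulting bound shows that acceptance before $i^\ast$, plus acceptance at $i^\ast$, is at least $(1-\epsilon)^2/7$. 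The constant $7$ comes from the ancilla rotation angle and the factor $1/2$ from skipping.

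The \textbf{main obstacle} is completeness. A long sequence of rejecting measurements can gradually rotate the state away from $\Pi_{i^\ast}$ without any single step accepting with much probability, so a naive union bound on disturbance fails. The first thing we would try is exactly the HLM averaging argument above: express the accept-or-not-disturbed quantity as a telescoping sum over the sequence and bound it with a Cauchy–Schwarz step. If the details become heavy, we would instead cite the lemma directly from \cite{Harrow_Sequential_2017}, since the statement is used only as a black box.
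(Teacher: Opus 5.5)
Your soundness half is fine: Gao's bound applied to whichever subsequence is actually performed gives acceptance at most $4\sum_i\Tr(\Pi_i\rho)\le 4N\delta$. The gap is in completeness, which carries the whole content of the lemma. The paper does not reprove the lemma; it imports it from \cite{Harrow_Sequential_2017}. The mechanism you sketch does not work as stated. Measuring the individual projectors one after another, with or without independent random skipping, is defeated by Zeno dragging. In $\mathbb{C}^2$ take $\rho=|0\rangle\langle 0|$ and $\Pi_i=I-|\psi_i\rangle\langle\psi_i|$ with $|\psi_i\rangle=\cos\frac{i\pi}{2N}|0\rangle+\sin\frac{i\pi}{2N}|1\rangle$. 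Then $\Tr(\Pi_N\rho)=1$. Yet each rejection leaves the state at $|\psi_i\rangle$, and the next performed measurement accepts with probability $\sin^2\frac{g\pi}{2N}$, where the gap $g$ between performed indices has bounded second moment. So the total acceptance probability is $O(1/N)$, not $\Omega(1)$. Hence random skipping is not ``equivalent for the analysis'' to a random order; a random order does break this example, but you give no argument for it. Nor can a per-measurement ancilla modification supply the damping you describe. Each $\Pi_i'$ is still a projector, so a rejection whose acceptance probability is $\eta$ moves the joint pure state by exactly $\sqrt{\eta}$ in trace distance. Keeping $\eta$ within a constant factor of $\Tr(\Pi_i\rho)$ therefore changes the disturbance only by a constant factor, whereas dragging exploits the gap between $\sqrt{\eta}$ and $\eta$. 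The telescoping/Cauchy--Schwarz step and the origin of the constant $7$ are asserted, not derived.

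The missing idea is to reduce the $N$ projectors to two by putting the index in superposition. This is the test of \cite{Harrow_Sequential_2017} that the paper actually implements later, alternating $\Pi$ and $I-\Delta$ for $2^{L+1}$ rounds. Start from $\rho\otimes|+\rangle\langle+|_J$, and set $\Pi=\sum_i\Pi_i\otimes|i\rangle\langle i|_J$ and $\Delta=I\otimes|+\rangle\langle+|_J$. For $T=2N$ rounds, measure $\{\Pi,I-\Pi\}$ and then $\{\Delta,I-\Delta\}$, accepting on any $\Pi$-accept or $\Delta$-reject. The start state lies in the range of $\Delta$, and $\Delta\Pi\Delta=M\otimes|+\rangle\langle+|_J$ with $M=\frac{1}{N}\sum_i\Pi_i$. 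Hence the all-reject probability is exactly $\Tr[(I-M)^{2T}\rho]$, and no dragging can occur. Soundness follows from $1-(1-x)^{2T}\le 2Tx$, giving acceptance at most $2T\Tr(M\rho)\le 4N\delta$. For completeness, purify $\rho$ and use $\Pi_j\le NM$ with the triangle inequality. This shows that $\rho$ has weight at least $(\sqrt{1-\epsilon}-\sqrt{\gamma})^2$ on the spectral subspace of $M$ with eigenvalues at least $\gamma/N$. The test therefore accepts with probability at least $(1-e^{-4\gamma})(\sqrt{1-\epsilon}-\sqrt{\gamma})^2$, which is at least $(1-\epsilon)^2/7$ for $\gamma=(1-\epsilon)/9$.
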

The OR test allows us to combine many candidate measurements without requiring a separate copy of the input state for each candidate.

    
    

\section{Oracle Model}\label{sec:oracle}
\subsection{Oracles for quantum algorithms}

We first review the oracle interfaces relevant to our separation.

\paragraph{Quantum oracles.}
Quantum oracles can provide different forms of access to the same underlying state or operation.
For quantum operations, common oracle interfaces provide access to a unitary and its inverse, a unitary alone, an isometry, or a general quantum channel.
Every quantum channel $\mathcal E$ admits a Stinespring isometric dilation $V$ and a unitary extension $U$ with an environment register $E$, such that
\begin{align}
    \mathcal{E}(\rho)
    = \Tr_E\!\left[V\rho V^\dagger\right]
    = \Tr_E\!\left[
        U\left(\rho\otimes\ket{0}\bra{0}_E\right)U^\dagger
    \right].
\end{align}
In the channel interface, the environment $E$ is traced out, and only the reduced output state is returned to the caller.
Thus, although quantum channels describe more general quantum operations, oracle access to a channel can be more limited than access to its unitary extension.

\paragraph{Quantum oracles for classical functions.}
Even when the underlying object is a classical function $f$, different oracle implementations can provide different forms of access to $f$.
Let $X$ and $Y$ denote the query and response registers, respectively.
Coherent access to $f$ is commonly implemented by a unitary acting as
\begin{align}
    U_f(\ket{x}_X\ket{y}_Y) = \ket{x}_X \ket{y \oplus f(x)}_Y.
\end{align}
This is the standard form of coherent access to a classical function, which we call a \textit{quantumly-accessible} oracle.
In the \textit{classical-accessible} oracle used here, the oracle first measures $X$ in the computational basis and then XORs $f(x)$ into $Y$ for the measured input $x$. 
Thus, the oracle destroys coherence between distinct query inputs.

\subsection{Our oracle}
We now give formal definitions of the two components of the oracle $\mathcal{O} = (\mathcal{R}, \mathsf{QPSPACE})$.

\paragraph{Classical-accessible random oracle.}
We use a classical-accessible random oracle $\mathcal R$~\cite{BDF+11,AF22}, defined as follows.
For every security parameter $\lambda\in\mathbb N$, independently sample a function $R_\lambda:\{0,1\}^{\lambda}\to\{0,1\}^{n(\lambda)}$ uniformly at random, where $n(\lambda)\ge\lambda$.
Each function $R_\lambda$ is sampled once and remains fixed for all queries. 
We denote the sampled function family by $R=\{R_\lambda\}_{\lambda\in\mathbb N}$ and its classical-accessible channel interface by $\mathcal R$.
The oracle acts on an arbitrary state $\rho_{XY}$ as 
\begin{align}
    \mathcal{R}(\rho_{XY})
    =
    \sum_{x\in\{0,1\}^{\lambda}}
    K_x \rho_{XY} K_x^\dagger,
    \quad
    K_x
    =
    \ket{x}\!\bra{x}_X
    \otimes
    \sum_{y\in\{0,1\}^{n}}
    \ket{y\oplus R_\lambda(x)}\!\bra{y}_Y.
\end{align}
This is a well-defined CPTP map implementing the classical-accessible evaluation of $R_\lambda$ described in the previous subsection.
When $Y$ is initialized to $\ket{0^{n}}$, its state conditioned on the measurement outcome $x$ is $\ket{R_\lambda(x)}$.

\paragraph{$\mathsf{QPSPACE}$ oracle.}
Following~\cite{Chen_Power_2025,BMMMY25}, we use a \textsf{QPSPACE} oracle to execute polynomial-width unitary computations on quantum inputs.
Roughly, it takes a classical description of an oracle-free Turing machine $M$, a time bound $t$ encoded in binary, a quantum data register, and an abort qubit.\footnote{Note that, in this work, a \textsf{QPSPACE} oracle does not refer to an oracle that takes a classical input $x$ and decides whether $x \in L$ using polynomial quantum space with bounded error (i.e., \textsf{BQPSPACE}). Rather, it takes a quantum state as input, applies a polynomial-width quantum circuit with potentially exponentially many gates to that state, and returns the resulting quantum state.}
We consider an interface in which the command registers encoding the Turing machine description and the time bound are measured in the computational basis at the start of each call, thereby destroying coherence between distinct classical commands.
The oracle runs $M$ for at most $t$ steps.
If $M$ halts within $t$ steps and outputs a valid oracle-free unitary quantum circuit on exactly the data register, then it applies that circuit to the data register.
Note that $t$ is encoded in binary; hence, a polynomial-length command can specify exponentially many gates for the \textsf{QPSPACE} oracle to execute.

More formally, the oracle $\mathcal{O}$ is defined as follows:

\begin{definition}[Oracle model]
    The oracle $\mathcal{O} = (\mathcal{R}, \mathsf{QPSPACE})$ is defined as follows:
    \begin{itemize}
        \item Classical-accessible random oracle $\mathcal{R}$,
        \begin{itemize}
            \item Input: $\rho_{XY}$, where $X$ is a $\lambda$-qubit query register and $Y$ is an $n$-qubit response register.
            \item Output: $\mathcal{R}(\rho_{XY})$.
        \end{itemize}
        \item $\mathsf{QPSPACE}$ oracle,
        \begin{itemize}
            \item Input: command registers encoding $(\langle M\rangle,\operatorname{bin}(t))$, a quantum data register, and an abort qubit initialized to $\ket{0}$.
            \item Action: measure the command registers in the computational basis and run the specified oracle-free Turing machine $M$.
            If $M$ halts in at most $t$ steps and outputs a valid oracle-free unitary circuit $C$ acting on exactly the data register, apply $C$ to that register.
            Otherwise, leave the data register unchanged and flip the abort qubit.
            \item Output: the measured command registers, the data register, and the abort qubit.
    \end{itemize}
    \end{itemize}
\end{definition}

\section{Constructions of \OWF}
The following theorem establishes the existence of quantum-secure \OWF in our oracle world. The proof is standard, but we include it for completeness.
\begin{theorem}\label{thm:main-1}
    With probability $1$ over the choice of the random functions ${R} = \{R_\lambda\}_{\lambda \in \mathbb{N}}$, the induced classical function family $f = \{ f_\lambda\}_{\lambda\in\mathbb{N}}$ is a quantum-secure \OWF relative to $\mathcal{O}$.
\end{theorem}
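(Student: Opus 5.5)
We take $f_\lambda := R_\lambda$, which is computable with a single classical query to $\mathcal R$, so efficient computation is immediate. The substance is one-wayness. The plan is to first prove an \emph{average-case} bound over the choice of $R$, and then pass to a \emph{probability-one} statement via Borel--Cantelli together with a union bound over the countably many uniform adversaries.

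\textbf{Average-case bound.} Fix a QPT adversary $\mathcal A$ making at most $q=q(\lambda)$ queries to $\mathcal R$ and any number of $\mathsf{QPSPACE}$ queries. Because $\mathcal R$ measures its query register, the interaction with $\mathcal R$ is a classical transcript of query/answer pairs $(x_1,R_\lambda(x_1)),\dots,(x_q,R_\lambda(x_q))$. The $\mathsf{QPSPACE}$ oracle is independent of $R$ and its internal circuits cannot call $\mathcal R$. So everything the adversary does between $\mathcal R$-queries is just some quantum operation that is independent of $R$ apart from the classical transcript seen so far.

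We run the standard lazy-sampling argument on a challenge $y=R_\lambda(x^*)$ with $x^*$ uniform.
\begin{itemize}
\item \emph{Querying $x^*$.} Conditioned on the transcript and on $y$, the point $x^*$ is close to uniform among the roughly $2^\lambda$ preimage candidates not yet queried. Hence each query hits $x^*$ with probability about $2^{-\lambda}$, up to corrections for collisions.
\item \emph{Fresh images.} On any unqueried $x\neq x^*$, the value $R_\lambda(x)$ is uniform and independent of the transcript. So outputting such an $x$ with $R_\lambda(x)=y$ succeeds with probability $2^{-n}\le 2^{-\lambda}$.
\item \emph{Collisions.} A queried $x_j\neq x^*$ with $R_\lambda(x_j)=y$ occurs with probability at most $q2^{-n}$.
\end{itemize}
Together these give $\mathbb E_R\Pr[\mathcal A \text{ inverts}]\le O((q+1)2^{-\lambda})$. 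The cleanest formalization treats a classical-query quantum algorithm as a distribution over classical decision strategies given the transcript, and applies a hybrid over queries.

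\textbf{Probability-one statement.} Let $\epsilon_\lambda=O(q2^{-\lambda})$. By Markov's inequality,
\[
\Pr_R\!\left[\Pr[\mathcal A\text{ inverts}]\ge 2^{-\lambda/2}\right]\le O\!\left(q\,2^{-\lambda/2}\right),
\]
which is summable in $\lambda$. By Borel--Cantelli, with probability $1$ over $R$ the success probability of $\mathcal A$ is below $2^{-\lambda/2}$ for all sufficiently large $\lambda$, and is therefore negligible. There are countably many uniform QPT machines, since each is a finite description together with a polynomial time bound. A countable intersection of probability-one events has probability one, so all adversaries fail simultaneously.

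\textbf{Main obstacle.} The step needing the most care is justifying that $\mathsf{QPSPACE}$ queries give no information about $R$. The $\mathsf{QPSPACE}$ oracle only runs oracle-free Turing machines and circuits, so its action is a fixed channel independent of $R$. It can therefore be absorbed into the adversary's (inefficient but $R$-independent) local computation, and the information-theoretic bound above does not care about computational power between queries. A second subtlety is that the adversary's intermediate quantum state may entangle choices of later queries. This is handled by conditioning on the measured classical transcript at each step, which is exactly what the classical-access channel guarantees.
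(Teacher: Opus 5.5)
Your proposal is correct and follows essentially the same route as the paper. You set $f_\lambda = R_\lambda$, absorb the $R$-independent $\mathsf{QPSPACE}$ calls into local computation, bound the expected inversion probability by $O(q2^{-\lambda})$ via the classical query transcript, and then apply Markov, Borel--Cantelli, and a countable union over adversaries. The paper makes your ``close to uniform'' step exact by bounding first-success events conditioned on the failure history, under which $x^*$ is exactly uniform over the unqueried inputs; it uses the threshold $2^{-\lambda/4}$ rather than your $2^{-\lambda/2}$, which is a cosmetic difference.
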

\begin{proof}
    We simply define the classical function $f_\lambda(x) \coloneqq R_\lambda(x)$, for all $x \in \{0, 1\}^\lambda$. 
    The function family is efficiently computable relative to $\mathcal{O}$ since the computation can be implemented by a single query to the oracle $\mathcal{R}$ as follows:
    first prepare the state $\ket{x}_X \ket{0^n}_Y$, then query $\mathcal{R}$, and finally measure the response register $Y$ in the computational basis.\footnote{This construction uses the oracle only on computational-basis states, and hence constitutes classical oracle access.}
    
    It remains to prove that $f$ is an \OWF against any quantum polynomial-time adversary with access to the oracle $\mathcal{O}$. 
    Note that since the oracle $\mathsf{QPSPACE}$ does not depend on the random oracle  $\mathcal{R}$, we may regard the adversary together with all of its $\mathsf{QPSPACE}$-queries as a single adaptive strategy; only its $R_\lambda$-queries access fresh information about $R_\lambda$. Also, at security parameter $\lambda$, the adversary might query other random-oracle components $R_{\lambda'}$ with $\lambda' \ne \lambda$.
    However, the random oracles at distinct security parameters are independent. Thus, we condition on all such independent components and analyze only the queries to $R_\lambda$.
    
    Suppose the challenger samples $x \leftarrow \{0, 1\}^\lambda$, and gives $y \coloneqq f_{\lambda}(x)$ to the adversary.
    Since $\mathcal A$ is a QPT adversary, it makes at most $q(\lambda)-1$ queries to $R_\lambda$ for some polynomial $q$.
    We count the adversary's final output $x'=\mathcal A^\mathcal O(1^\lambda,y)$ as an additional hypothetical query, since $x'$ may satisfy $R_\lambda(x')=y$ even if it was never queried. \footnote{The adversary may also query $R_{\lambda'}$ with $\lambda' \ne \lambda$ and $\mathsf{QPSPACE}$; throughout this proof, \textit{query} means a query to $R_\lambda$.}
    Because $\mathcal{R}$ first measures $X$, each query to the channel oracle can be simulated exactly by one classical query $x_i \mapsto R_\lambda(x_i)$ followed by the corresponding XOR shift on $Y$; hence the pair $(x_i, R_\lambda(x_i))$ is well defined even when the query and response registers are entangled with the adversary's workspace.
    
    Let $(x_i, y_i), i \in [q]$ denote the $i$th queried inputs and responses, padding the list if needed and treating the final output $x'$ as a hypothetical last query, where $y_i = R_\lambda(x_i)$.
    Let $E_i$ be the event that the $i$th query finds a preimage of $y$, i.e., $\{y_i = y\}$.
    For a fixed choice of the function family $R$, denote the success probability of the adversary $\mathcal{A}$ by
    \begin{align}
        p_{\mathcal{A}, \lambda}(R) \coloneqq \Pr_{x \leftarrow \{0, 1\}^\lambda} \left[f_\lambda(x) = f_{\lambda}\big(\mathcal{A}^{\mathcal{O}}(1^\lambda, y)\big)\right].
    \end{align}
    We now bound the expectation of this success probability over the choice of the random function family $R$. This is bounded as
    \begin{align}
         \mathbb{E}_R \left[ p_{\mathcal{A}, \lambda}(R)  \right] \le \Pr[\bigcup_{i=1}^q E_{i}].
    \end{align}
    Define the disjoint events $F_{i}$, $i \in [q]$, to denote that the first inversion event occurs on the $i$th query, i.e., $F_i = \bar{E}_{1} \cap \cdots \cap \bar{E}_{i-1} \cap {E}_{i}$.
    Then,
    \begin{align}
        \mathbb{E}_R\left[ p_{\mathcal{A}, \lambda}(R)\right] &\le  \Pr[\bigcup_{i=1}^q F_{i}] \\
        &= \sum_{i=1}^q \Pr[\bar{E}_{1} \cap \cdots \cap \bar{E}_{i-1} \cap {E}_{i}] \\
        &\le  \sum_{i=1}^q \Pr[{E}_{i} \mid  \bar{E}_{1} \cap \cdots \cap \bar{E}_{i-1} ],
    \end{align}
    where the first line uses $\bigcup_i E_i=\bigcup_i F_i$, the equality uses disjointness, and the last inequality follows from the chain rule.
    For simplicity, we denote the failure history $\bar{E}_{1} \cap \cdots \cap \bar{E}_{i-1}$ by $H_i$.
    
    There are two cases for a successful query.
    First, the adversary queries the planted input $x$, that is, $x_i=x$. Second, it queries another preimage $x_i\in f_\lambda^{-1}(y)$ with $x_i\ne x$.
    
    Conditioned on a complete failure transcript, the planted input is uniform among the unqueried inputs. There are at least $2^\lambda-(i-1)$ such inputs. At any other fresh input, the function value remains uniform in $\{0,1\}^n$; a repeated query cannot succeed after this failure history. Averaging over these transcripts, the success probability of the $i$th attempt is bounded as
    \begin{align}
        \Pr[E_i |H_i] &=  
        \Pr[\{x_i = x\} \mid  H_i] \cdot \Pr[E_i\mid \{x_i = x\} \cap H_i]\\
        &\qquad +
        \Pr[\{x_i \ne x\} \mid H_i]  \cdot  \Pr[E_i \mid \{x_i \ne x\}\cap H_i] \\
        &\le \frac{1}{2^{\lambda}-{(i-1)}} + \frac{1}{2^n}.  
    \end{align}
    Therefore, we obtain
    \begin{align}
       \mathbb{E}_R\left[ p_{\mathcal{A}, \lambda}(R)\right] &\le  \sum_{i=1}^q \left( \frac{1}{2^{\lambda} - (i-1)} + \frac{1}{2^n}\right) \\
        &\le \frac{q}{2^\lambda -q +1} + \frac{q}{2^n} \le \frac{3q}{2^{\lambda}},
    \end{align}
    where the last inequality holds for all sufficiently large $\lambda$, since $q$ is polynomial and $n\ge\lambda$.
    By Markov's inequality,
    \begin{align}
        \Pr_R[  p_{\mathcal{A}, \lambda}(R) \ge 2^{- \lambda / 4} ] \le 2^{\lambda/4} \left(\frac{3q}{2^\lambda} \right).
    \end{align}
    Since $q$ is polynomial in $\lambda$, we have $\sum^{\infty}_{\lambda=1} 3q(\lambda)2^{-3\lambda/4}<\infty$.
    By the Borel-Cantelli lemma, with probability $1$ over $R$, 
    \begin{align}
        p_{\mathcal{A}, \lambda}(R) \le 2^{- \lambda/4},
    \end{align}
    for all sufficiently large $\lambda$.
    Since there are only countably many QPT adversaries, this bound holds simultaneously for every QPT adversary.
    Therefore, with probability $1$ over $R$, the function family $f$ is classically computable and quantum-secure one-way relative to $\mathcal{O}$.

    Note that we do not assume that $\mathcal{A}$ is classical. $\mathcal{A}$ may use arbitrary quantum computation and entangle its workspace with the query and response registers. However, since the oracle $\mathcal{R}$ first measures the input register, each oracle call is exactly reducible to one adaptively chosen classical input-output pair, as formalized above.
\end{proof}


Theorem~\ref{thm:main-1} also gives the following classical primitives.
\begin{corollary}\label{cor:classical}
    With probability $1$ over the choice of the random functions ${R} = \{R_\lambda\}_{\lambda \in \mathbb{N}}$, the following primitives exist relative to $\mathcal O=(\mathcal R,\mathsf{QPSPACE})$:
    \begin{itemize}
        \item quantum-secure $\mathsf{PRGs}$ with any polynomial output length,
        \item quantum-secure $\mathsf{PRFs}$ with classical queries,
        \item IND-CPA-secure $\mathsf{SKE}$ and EUF-CMA-secure $\mathsf{MAC}$,
        \item statistically binding, computationally hiding interactive bit commitments.
    \end{itemize}
    All honest algorithms are classical polynomial-time algorithms using only classical queries to $\mathcal R$.
    Computational security holds against QPT adversaries with the full access to $\mathcal O$ prescribed by our oracle model.
\end{corollary}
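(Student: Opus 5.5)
The plan is to build every primitive from the one-way function $f_\lambda = R_\lambda$ of Theorem~\ref{thm:main-1} using the standard classical black-box constructions, and to lift their security proofs to our oracle world by observing that each is a relativizing, black-box reduction. In every construction the honest algorithms only evaluate $f$ classically, so they call $\mathcal R$ on computational-basis inputs. That settles the claim that all honest algorithms are classical polynomial-time algorithms making classical queries.

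The constructions are the usual ones. A PRG is obtained from the OWF via H\aa stad--Impagliazzo--Levin--Luby~\cite{HILL99}, and its output is stretched to any polynomial length by iteration. PRFs with classical queries then come from GGM~\cite{GGM86} applied to a length-doubling PRG. IND-CPA SKE uses $\mathsf{Enc}_k(m)=(r, F_k(r)\oplus m)$ with fresh random $r$, and an EUF-CMA MAC is $\mathsf{Mac}_k(m)=F_k(m)$, with $F$ having output length $\lambda$. Statistically binding, computationally hiding bit commitments come from Naor's scheme~\cite{Nao91} instantiated with a PRG of output length $3\lambda$.

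For security, each proof is a reduction that turns an adversary $\mathcal A$ against the derived primitive into an adversary $\mathcal B$ against the underlying primitive, where $\mathcal B$ runs $\mathcal A$ as a subroutine and simulates honest evaluations through its own oracle access. The main point to check is that these reductions remain valid for QPT adversaries that have the full oracle $\mathcal O$, including coherent-output $\mathsf{QPSPACE}$ calls and channel queries to $\mathcal R$. I will argue this in three steps.

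First, $\mathcal B$ simply forwards every $\mathcal O$-query made by $\mathcal A$ to its own oracle. Since $\mathcal B$ has the same oracle $\mathcal O$, the reduction relativizes and $\mathcal B$ is QPT relative to $\mathcal O$.

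Second, the hybrid and counting arguments in HILL, GGM, the PRF-to-SKE/MAC reductions and Naor's hiding proof only need classical access to the challenge function or string. This matches the classical-query security notions we defined, for instance Definition~\ref{def:prf} restricts the adversary to classical queries. GGM's proof uses a hybrid over the polynomially many queried tree nodes. With classical queries these nodes are well defined, so the reduction to PRG security goes through against quantum distinguishers as in~\cite{Zha12}.

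Third, the HILL reduction from distinguishing to inverting is a uniform black-box reduction that holds against quantum adversaries. This is the part needing the most care: HILL's proof uses Goldreich--Levin and hardcore-function arguments, and I will rely on the known quantum-secure versions, which make only black-box use of the distinguisher and evaluate $f$ classically. If this step is awkward, a cleaner fallback exists because $R_\lambda$ is a random oracle: build the PRG directly as $G(s)=R_\lambda(s)$ after first length-shrinking via a truncation or Goldreich--Levin bit. One can then bound distinguishing advantage information-theoretically, exactly as in the proof of Theorem~\ref{thm:main-1}, since $\mathsf{QPSPACE}$ is independent of $R$ and the adversary learns about $R_\lambda(s)$ only by querying $s$.

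Binding of Naor's commitment is statistical and holds for every PRG with stretch $3\lambda$, independent of any oracle, by the standard union bound over pairs of seeds. Each of the countably many reductions fails only on a measure-zero set of $R$, because it inherits the probability-$1$ statement of Theorem~\ref{thm:main-1}. Taking a union over these countably many null sets gives the corollary with probability $1$ over $R$.
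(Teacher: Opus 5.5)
Your proposal is correct and follows essentially the same route as the paper: instantiate HILL, GGM, the standard PRF-based SKE/MAC, and Naor's commitment with $f_\lambda=R_\lambda$. You then observe that the honest algorithms and the reductions evaluate $f$ only on classical inputs while forwarding the adversary's $\mathcal O$-queries unchanged, and you get binding from the $2^{2\lambda}/2^{3\lambda}$ counting bound for each fixed $R$.

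Your union over countably many null sets is harmless but unnecessary: the probability-one event of Theorem~\ref{thm:main-1} already holds for all QPT adversaries simultaneously, so the paper simply fixes one good $R$. Your random-oracle fallback for the PRG, whose stretch step is loosely specified, is not needed for the main argument.
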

\begin{proof}
    Fix $R$ satisfying Theorem~\ref{thm:main-1}.
    The HILL and GGM constructions give quantum-secure \textsf{PRGs} and \textsf{PRFs} with classical queries~\cite{HILL99,GGM86,Zha12,Zha25}.
    Their reductions evaluate the underlying functions only on classical inputs and run the adversary with its prescribed $\mathcal O$-access unchanged.
    Consequently, a non-negligible attack gives a QPT inverter for $f_\lambda=R_\lambda$ relative to the same $\mathcal O$.
    Standard \textsf{PRF}-based constructions give the encryption and authentication schemes~\cite{GGM85}, and Naor's construction gives the commitments~\cite{Nao91}.
    Their computational-security reductions have the same access property.
    For binding, use a \textsf{PRG} with seed length $\lambda$ and output length $3\lambda$.
    For every fixed $R$, the receiver's uniform $3\lambda$-bit challenge belongs to the set of XOR differences of two \textsf{PRG} outputs with probability at most $2^{2\lambda}/2^{3\lambda}=2^{-\lambda}$.
    Outside this event, no commitment transcript has valid openings to both bits, regardless of the sender's computation or oracle access.
    All honest algorithms use only classical evaluations of $f$, so the conclusion holds on the probability-one event of Theorem~\ref{thm:main-1}.
\end{proof}


Together, Corollary~\ref{cor:classical} and Theorem~\ref{thm:main-2} show that these classical primitives can exist in an oracle world without \PRS of output length $\omega(\log\lambda)$.

\section{Oracle Separation of \PRS from \OWF}

Finally, we prove our main result.
\begin{theorem}[]\label{thm:main-2}
    For every choice of the random functions $R$, there is no \PRS with output length $m(\lambda) = \omega(\log \lambda)$ relative to the oracle $\mathcal{O}$.
\end{theorem}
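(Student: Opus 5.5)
We fix an arbitrary $R$ and an arbitrary QPT generator $G^{\mathcal O}$ with output length $m=\omega(\log\lambda)$, and build a QPT distinguisher with noticeable advantage, following the technical overview. First, we purify $G(1^\lambda,k)$. Every call to $\mathcal R$ (and the command registers of every $\mathsf{QPSPACE}$ call) is a computational-basis measurement, and we defer all intermediate measurements and finally measure all nonoutput registers. This writes $\rho_k=\sum_{i}p_i\ketbra{\psi_i}$, with branch index $i$ a classical string of length $L=\mathrm{poly}(\lambda)$ recording all measured query inputs, replies and final outcomes, so $N\le 2^L$. We then invoke the dichotomy (Lemma~\ref{lem:dichotomy}) with a threshold $\epsilon$ chosen as a small constant, or $1/\mathrm{poly}$ as needed. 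Either $\Tr(\rho_k^2)\le 1-\epsilon$, or there is $i$ with $p_i\ge 1/(2N)$ and $\bra{\psi_i}\rho_k\ket{\psi_i}\ge 1-O(\epsilon)$.

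The distinguisher has two parts. In the first, it runs $\mathrm{poly}(\lambda)$ SWAP tests on pairs of copies and outputs ``PRS'' if any of them rejects. Haar copies are pure and never reject, while in Case 1 some test rejects with constant probability. In the second, it performs the quantum OR test over candidates $(i,d)$, where $d$ lists hardwired oracle answers. Lemma~\ref{lem:oracle_free} guarantees that the oracle-free circuits $U_{i,d}$, obtained by replacing each $\mathcal R$-call by its answer in $d$ (and handling adaptivity, since later query inputs in $i$ depend on earlier answers), contain for each real branch a circuit that postselects to $\ket{\psi_i}$ with probability $p_i$. $\mathsf{QPSPACE}$ calls inside $G$ are inlined: the command is fixed by $i$, so the simulated Turing machine can run it directly.

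For each candidate, we wrap $U_{i,d}^\dagger$ followed by the all-zero check in Grover fixed-point amplification, tuned for success probability $\ge 1/(2N)$ with error $\epsilon'$. This gives a projector $\Pi_{i,d}$ with $\Tr(\Pi_{i,d}\ketbra{\psi})\ge(1-\epsilon')\abs{\braket{\psi_i}{\psi}}^2$, and no more than roughly $\abs{\braket{\psi_i}{\psi}}^2$ for pure $\psi$. We apply it to $L'$ copies and take the product projector. In Case 2 the product test accepts $\rho_k^{\otimes L'}$ with probability at least $((1-\epsilon')(1-O(\epsilon)))^{L'}$. This is at least $1-1/2$ after choosing $\epsilon,\epsilon'\le 1/(cL')$. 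That choice would force the purity test to use $\mathrm{poly}$ many repetitions, which is fine. Completeness of the OR lemma then gives constant acceptance.

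For soundness on Haar input, the average of any fixed candidate's acceptance over Haar is at most $\binom{2^m+L'-1}{L'}^{-1}\le (L'!)\,2^{-mL'}$. The OR lemma bounds acceptance by $4\cdot 2^{L''}\delta$ for a Haar-averaged $\delta$; here we must average over $\psi$ inside the bound, which we do by linearity since the test is applied to $\ket\psi^{\otimes L'}$ and the acceptance of a single OR-test run is linear in the input state. With $L''$ the candidate description length and $L'=L''$, this is $2^{L''(2-m)}L''!$, which is negligible because $m=\omega(\log\lambda)$. The whole OR circuit is oracle-free and polynomial width, and an exponential-time Turing machine of polynomial description prints it. We therefore run it with one $\mathsf{QPSPACE}$ call.

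The main obstacle is Lemma~\ref{lem:oracle_free}-style bookkeeping, namely making the candidate family oracle-free while the branch label and oracle answers are adaptively interdependent. I would handle it by defining candidates as full transcripts $(x_1,y_1,\dots,x_q,y_q,\text{outcomes})$. $U_{i,d}$ projects each query register onto $x_j$ and XORs $y_j$, and the true transcript with $y_j=R(x_j)$ reproduces the branch exactly. A second delicate point is keeping the fixed-point amplification from overshooting on Haar states, so that the per-candidate upper bound stays linear in the overlap.
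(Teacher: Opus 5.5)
Your proposal is correct and is essentially the paper's proof: the dichotomy lemma with repeated purity tests, $R$-independent oracle-free candidates obtained by enumerating all oracle replies, $\pi/3$ fixed-point amplification, $L$-fold product tests combined by the quantum OR lemma in a single $\mathsf{QPSPACE}$ call (the paper makes the sequential OR test unitary with polynomial width using a coherent counter register), and the Haar-moment bound applied to the averaged input by linearity. The overshoot you flag does not occur, because each amplified test is the rank-one projector onto $\widetilde U_d|0^L\rangle$, and its $E=0$ component stays proportional to $|\psi_d\rangle$ since the recursion remains in the span of $U_{d,j}|0^L\rangle$ and $PU_{d,j}|0^L\rangle$; hence the per-copy acceptance on $|\phi\rangle|0\rangle$ is exactly $\widetilde p_d|\langle\psi_d|\phi\rangle|^2\le|\langle\psi_d|\phi\rangle|^2$.
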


In this section, we denote the density matrix of a pure state $\ket{\phi}$ as $\rho_{\phi} \coloneqq \ket{\phi}\bra{\phi}$, and for a fixed $R$, we write $\rho_k \coloneqq \Gen^{\mathcal O}(1^\lambda,k)$. 
In the following subsections, we first analyze the properties of an arbitrary QPT generator that may produce mixed outputs. 
We show that its output state either has low purity or has a large overlap with a particular pure state in its mixed-state decomposition that occurs with at least inverse-exponential probability. 
This dichotomy suggests the following attack. In the low-purity case, a purity test suffices to detect the output. 
Otherwise, we construct preparation circuits for all possible branches and use an OR test to detect whether any candidate has sufficiently large overlap with the output state.
Finally, we show that the OR test can be implemented through $\mathsf{QPSPACE}$ and analyze the distinguishability advantage of the adversary.

\subsection{Constructing an oracle-free branch}

\paragraph{Dichotomy.}
Since we allow the \PRS generator to perform intermediate measurements, query the quantum channel oracle $\mathcal{O}$, and discard work registers, the resulting \PRS state may naturally be mixed. 
The following lemma establishes a useful property of general mixed states. It states that if a mixed state is nearly pure, then a branch of probability at least $\frac{1}{2N}$ has a large overlap with the mixed state.

\begin{lemma}\label{lem:dichotomy}
    Let $\rho=\sum_{i=1}^{N}p_i\ket{\psi_i}\bra{\psi_i}$ be a state with a finite pure-state decomposition, and let $0<\delta<1/2$. Then, one of the following holds:
    \begin{itemize}
        \item (Case 1) $1- \Tr(\rho^2) > \delta$, or
        \item (Case 2) $1- \Tr(\rho^2) \le \delta$ and there is an index $i$ such that 
        \begin{align}
             p_i \ge \frac{1}{2N}, \quad \langle \psi_i | \rho | \psi_i \rangle \ge 1 - 2 \delta. 
        \end{align}
    \end{itemize}
\end{lemma}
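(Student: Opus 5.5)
We assume Case 1 fails, so $\Tr(\rho^2)\ge 1-\delta$, and look for the index required in Case 2. The starting identity is
\begin{align}
    \Tr(\rho^2)=\sum_{i=1}^N p_i\langle\psi_i|\rho|\psi_i\rangle ,
\end{align}
so purity is the $p$-weighted average of the overlaps $o_i\coloneqq\langle\psi_i|\rho|\psi_i\rangle\in[0,1]$. Split the indices into a good set $G=\{i: o_i\ge 1-2\delta\}$ and its complement. Each $i\notin G$ contributes at most $p_i(1-2\delta)$, and each $i\in G$ contributes at most $p_i$.

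Write $P_G=\sum_{i\in G}p_i$. Then
\begin{align}
    1-\delta\le \Tr(\rho^2)\le P_G+(1-P_G)(1-2\delta)=1-2\delta(1-P_G),
\end{align}
which rearranges to $P_G\ge 1/2$. This is a Markov-type bound on the deficit $1-o_i$, whose $p$-average is at most $\delta$. Since $|G|\le N$, pigeonhole gives some $i\in G$ with $p_i\ge P_G/|G|\ge 1/(2N)$. That index has $o_i\ge 1-2\delta$, which is exactly Case 2.

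No step here is a real obstacle. The only choice that needs care is the threshold $1-2\delta$: the factor $2$ is what makes the good set carry weight at least $1/2$ and so produces the $1/(2N)$ bound. The hypothesis $\delta<1/2$ only keeps that threshold positive and is not otherwise used.
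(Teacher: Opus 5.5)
Your proof is correct and matches the paper's: the same identity $\Tr(\rho^2)=\sum_i p_i\langle\psi_i|\rho|\psi_i\rangle$, a Markov bound showing the indices with overlap at least $1-2\delta$ carry weight at least $1/2$, and pigeonhole over at most $N$ indices to get $p_i\ge\frac{1}{2N}$. Your direct computation of $P_G\ge 1/2$ is just the paper's Markov step written out.
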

\begin{proof}
    Suppose that $1 - \Tr(\rho^2) \le \delta$. From the assumption, we have
    \begin{align}
        \sum_{i=1}^N p_i (1 - \langle{\psi_i | \rho | \psi_i}\rangle) = 1 -{\Tr}(\rho^2)\le \delta.
    \end{align}
    By Markov's inequality, the indices with overlap at least $1-2 \delta$ have total probability at least $1/2$. Since there are at most $N$ such indices, one index has probability at least $\frac{1}{2N}$.
\end{proof}

In the first case, repeated purity tests suffice to detect mixedness.
In the second case, the lemma guarantees only the existence of such a branch. The adversary does not know which branch satisfies the guarantee, and there are exponentially many possible branches. 
Moreover, to preserve the security of \OWF, the oracle $\mathsf{QPSPACE}$ cannot directly execute $\Gen^{\mathcal{O}}$. We therefore introduce the following lemma.

\paragraph{Oracle-free branch.} 
We apply the above lemma to the pure-state decomposition induced by the complete execution branches of $\Gen^{\mathcal O}$. To define these branches, we retain discarded registers and include the outcomes of a final measurement of all nonoutput registers. Distinct complete records together with their input keys receive distinct descriptors $d$ of polynomial length because $\Gen^{\mathcal O}$ runs in polynomial time.

\begin{lemma}\label{lem:oracle_free}
For any uniform QPT generator $\Gen^{\mathcal O}$, there exist a polynomial $L=L(\lambda)$ and a family of oracle-free unitaries $\{U_d\}_{d\in\{0,1\}^L}$ on $L$ qubits satisfying the following:
\begin{enumerate}
    \item \textup{(Uniform construction.)}
    A single $R$-independent Turing machine, given $(1^\lambda,d)$, prints $U_d$ in time $2^{O(L)}$.
    \item \textup{(Branch reproduction.)}
    For every fixed $R$ and key $k$, each actual branch of $\Gen^{\mathcal O}(1^\lambda,k)$ has a distinct descriptor $d\in\{0,1\}^L$.
    Postselecting the nonoutput qubits of $U_d|0^L\rangle$ on the all-zero outcome succeeds with exactly that branch's probability and prepares its normalized output state.
\end{enumerate}
\end{lemma}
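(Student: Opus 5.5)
We will prove Lemma~\ref{lem:oracle_free} by putting $\Gen^{\mathcal O}$ into a normal form and then hardwiring a guessed transcript of all classical interactions. Fix a polynomial $T(\lambda)$ bounding the running time of $\Gen$, so that it makes at most $T$ oracle calls, at most $T$ intermediate measurements, and uses at most $T$ qubits. We treat each of the three kinds of non-unitary step as a classical event with a recordable outcome:
\begin{itemize}
  \item an $\mathcal R$-call measures $X$, giving a query $x$, and returns $R_\lambda(x)$;
  \item a $\mathsf{QPSPACE}$-call measures the command $(\langle M\rangle,\operatorname{bin}(t))$ and then applies an $R$-independent circuit or flips the abort qubit;
  \item an intermediate measurement yields a bit string.
\end{itemize}
Discarding registers is replaced by keeping them and measuring them at the end. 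A complete record $d$ consists of the key $k$, the ordered list of all classical outcomes (query inputs $x_j$, answers $a_j$, commands, measurement results), and the final measurement outcome of all nonoutput registers. All of these have length $\mathrm{poly}(T)$, so after padding with a fixed encoding we take $L$ polynomial and also large enough to host the circuit's qubits plus $O(L)$ ancillas.

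\textbf{Step 1: construction of $U_d$.} On input $(1^\lambda,d)$, the Turing machine parses $d$. If $d$ is malformed, it outputs the identity on $|0^L\rangle$ followed by a gate that makes postselection fail, e.g.\ an $X$ on a flag qubit that counts as nonoutput. Otherwise it builds the unitary by simulating $\Gen$ step by step with the record:
\begin{itemize}
  \item Unitary gates of $\Gen$ are copied.
  \item At the $j$th $\mathcal R$-call, it inserts the projector-free substitute $|x_j\rangle\langle x_j|$. This cannot be a unitary, so instead we use a deferred check: it applies the XOR of the hardwired value $a_j$ onto $Y$, and adds a gate that compares $X$ to $x_j$, writing a mismatch bit into a fresh nonoutput ancilla.
  \item Intermediate measurements are handled the same way, by comparing to the recorded outcome with the mismatch written to an ancilla.
  \item For a $\mathsf{QPSPACE}$-call with recorded command, it runs $M$ for $t\le 2^{T}$ steps itself and inlines the printed circuit, or flips the abort qubit if $M$ fails.
\end{itemize}
Runtime is $2^{O(L)}$, since each inlined $\mathsf{QPSPACE}$ circuit has at most $2^{T}$ gates and there are polynomially many. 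Crucially the machine never queries $R$; it only reads $a_j$ from $d$. For the comparison to act as a projector after postselection, we copy $X$ coherently with a CNOT into a fresh register and XOR $x_j$ into it, so that ``all zero'' means a match.

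\textbf{Step 2: branch reproduction.} For fixed $R$ and $k$, we take an actual branch with record $d$, meaning its recorded answers satisfy $a_j=R_\lambda(x_j)$ and its outcomes occur. By induction over the steps, the unnormalized state of $\Gen$ conditioned on the outcome prefix equals the state of $U_d|0^L\rangle$ projected onto the zero-mismatch subspace of the ancillas used so far. At an $\mathcal R$-call, the Kraus operator $K_{x_j}$ acts identically to ``project $X$ onto $x_j$, then XOR $a_j$'', because $a_j=R_\lambda(x_j)$. Measurements and $\mathsf{QPSPACE}$ steps are handled analogously, and the $\mathsf{QPSPACE}$ channel's measurement of commands matches the hardwired command. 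The CNOT copies commute with later gates on other registers, so deferring all projections to the end is valid. Postselecting on all mismatch ancillas and final nonoutput registers being zero therefore yields amplitude norm squared exactly $p_d$, and the normalized output equals the branch state. To make the final-measurement outcome zero, we XOR the recorded final string into those registers before postselecting. Distinctness of descriptors holds by definition.

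\textbf{Main obstacle: adaptivity.} The query inputs $x_j$ of later steps depend on earlier answers $a_{j'}$. In the actual run these answers are $R_\lambda(x_{j'})$, but in $U_d$ they are hardwired guesses. Our plan handles this because $d$ fixes both the inputs and the answers. For the true branch the hardwired answers coincide with $R$, so the simulated controlled operations match exactly. Records inconsistent with $R$ merely give spurious extra candidates, which the lemma permits. The care needed is to make sure the induction conditions on the full prefix including answers, so that each later gate, which may be classically controlled on previous outcomes, is fixed in $U_d$. We resolve this by compiling classical control on recorded outcomes directly into the choice of subsequent gates.
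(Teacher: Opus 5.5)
Your proposal is correct and follows essentially the same route as the paper's proof. Both arguments hardwire a descriptor containing the key, the full classical record (oracle query inputs, measured $\mathsf{QPSPACE}$ commands, intermediate and final nonoutput outcomes) and guessed $\mathcal R$-replies, inline the $\mathsf{QPSPACE}$ circuits, defer all measurements, and flip the recorded bits so that all-zero postselection reproduces each actual branch; your CNOT-copy mismatch ancillas and failing flag for malformed descriptors (instead of the paper's identity) are only more explicit or slightly different bookkeeping.
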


\begin{proof}
    Fix $R$ and $k$. Our goal is to construct, for each complete branch of $\Gen^{\mathcal O}(1^\lambda,k)$, an oracle-free unitary that reproduces the branch's probability and output state upon postselection. 
    We first handle the operations of $\Gen^{\mathcal O}$ other than its oracle calls.
    By unitary dilation and the principle of deferred measurement, we can implement these operations unitarily, coherently recording intermediate measurement outcomes and retaining the environment registers. 
    For a proposed record, fix subsequent classical controls to the prescribed outcomes. 
    Together with a final measurement of all nonoutput registers, the recorded outcomes specify complete branches $i$ and give
    \begin{align}
        \rho_{k}=\sum_i p_i|\psi_i\rangle\langle\psi_i|,
    \end{align}
    where each record $i$ also includes the inputs measured by $\mathcal R$ and the $\langle M \rangle, \mathrm{bin}(t)$ measured by $\mathsf{QPSPACE}$.
    
    We next remove the oracle calls. 
    Once a measured command is fixed, $\mathsf{QPSPACE}$ runs the oracle-free circuit specified by that command (or applies the prescribed abort flip), so we can insert that operation directly. At an $\mathcal R$ call, the measured input is already part of the branch record, but its reply is unknown. We therefore allow every possible reply as a candidate.
    
    Let $d$ contain $k$, the proposed complete record, and the proposed replies. Use these values to construct an oracle-free unitary, while recording the inputs and commands measured by the oracles for final postselection. If $d$ matches an actual branch $i$, the fixed oracle actions agree with those of that branch. Unitary dilation and deferred measurement preserve its unnormalized output and probability, so postselection gives $\sqrt{p_i}|\psi_i\rangle$. Flipping the nonoutput bits specified by $d$ yields a unitary $U_d$ for which the same postselection outcome is all zero for every candidate. Thus it succeeds with probability $p_i$ and prepares $|\psi_i\rangle$.
    
    Finally, $\Gen^{\mathcal O}$ uses polynomially many qubits and oracle calls. The polynomial-length binary time bound of its $\mathsf{QPSPACE}$ calls allows their circuits to be printed in $2^{\mathrm{poly}(\lambda)}$ time. Choose $L\ge \max\{\lambda, m(\lambda)\}$ large enough for the descriptors, circuit width, and all command lengths, and extend every $U_d$ to $L$ qubits by adding all-zero qubits. Malformed descriptors are assigned the identity circuit. A single uniform Turing machine, independent of $R$, then prints $U_d$ from $(1^\lambda,d)$ in time $2^{O(L)}$. Distinct actual branches have distinct length-$L$ descriptors, so there are at most $2^L$ of them.
\end{proof}

\subsection{Amplified candidate states for OR test}
By Lemma~\ref{lem:oracle_free}, we can enumerate a family of oracle-free unitaries that contains, for every key $k$ and every random function family $R$, a unitary corresponding to each possible execution branch of the \PRS generator. 
We use projectors associated with these unitaries as candidates for an OR test that distinguishes \PRS outputs from Haar-random states.

Of the $L$ qubits on which $U_d$ acts, let $A$ denote the first $m$ output qubits and $E$ the remaining $L-m$ qubits. For each descriptor $d$, let $p_d$ be the probability of obtaining $|0^{L-m}\rangle_E$ after applying $U_d$ to $|0^L\rangle$. 
If $p_d > 0$, let $|\psi_d\rangle_A$ be the corresponding normalized pure state on $A$.
For an actual branch $i$ with descriptor $d$, we have $p_d=p_i$ and $|\psi_d\rangle=|\psi_i\rangle$ where $p_i$ and $\ket{\psi_i}$ are the corresponding branch probability and pure state in $\rho_k  = \sum_{i} p_i \ket{\psi_i} \bra{\psi_i}$. Given an $m$-qubit state $\rho$, define
\begin{align}
    \omega_\rho  \coloneqq
    (\rho\otimes|0^{L-m}\rangle\langle0^{L-m}|)^{\otimes L},
    \quad \Lambda_d \coloneqq
    (U_d|0^L\rangle\langle0^L|U_d^\dagger)^{\otimes L}.
\end{align}
To apply the completeness guarantee of the quantum OR lemma with $\epsilon=1/2$ when the challenge state is a \PRS, it suffices to find a descriptor $d$ for which ${\Tr}(\Lambda_d \omega_{\rho_{k}})\ge 1/2$. 
In Case 2 of Lemma~\ref{lem:dichotomy}, taking $N=2^L$ and $\delta=\frac{1}{8L}$ guarantees an actual branch $i$ with $p_i\ge2^{-L-1}$ and $\langle\psi_i|\rho_{k}|\psi_i\rangle\ge1-\frac{1}{4L}$.
For its corresponding descriptor $d$, we obtain
\begin{align}
    \Tr(\Lambda_d \omega_{\rho_{k}}) = \bigl(p_d \langle\psi_d|\rho_k|\psi_d \rangle\bigr)^L
    \ge\biggl(\frac{1}{2^{L+1}}\Bigl(1-\frac1{4L}\Bigr)\biggr)^L.
\end{align}
Since this lower bound is below $1/2$, it does not establish the completeness condition for the OR test. 
We first amplify the preparation probability of each candidate branch.

We use the fixed-point amplification method of~\cite{Grover_fixed_2005}.
The recursion does not require the exact initial success probability: 
the lower bound from Lemma~\ref{lem:dichotomy} suffices to choose the number of steps. This particular $\pi/3$ construction does not achieve the $O(\sqrt{1/p_d})$ query scaling of ordinary amplitude amplification.
That cost is acceptable here because the amplified circuit has exponential length but polynomial width, and we will include it in the oracle-free OR-test circuit executed by $\mathsf{QPSPACE}$.

For each descriptor $d$, set $U_{d,0}=U_d$ and take $|0^L\rangle$ as the starting state. 
The target subspace consists of the states for which $E$ is all zero. 
Let $P\coloneqq I_A\otimes|0^{L-m}\rangle\langle0^{L-m}|_E$ be its projector. 
The two phase rotations of~\cite{Grover_fixed_2005} become
\begin{align}
    R_s \coloneqq I-(1-e^{i\pi/3})|0^L\rangle\langle0^L|, \quad 
    R_t \coloneqq I-(1-e^{i\pi/3})P.
\end{align}
Using these rotations, define the unitary at the next step recursively by
\begin{align}
    U_{d,j+1}\coloneqq U_{d,j}R_sU_{d,j}^\dagger R_tU_{d,j}.
\end{align}
If $p_{d,j}$ is the probability that $U_{d,j}|0^L\rangle$ lies in the target subspace, the fixed-point identity gives
\begin{align}
    p_{d,j+1}=1-(1-p_{d,j})^3.
\end{align}
Thus, the failure probability is cubed at each step. After $L$ steps, let $\widetilde U_d\coloneqq U_{d,L}$. Its success probability is
\begin{align}
    \widetilde p_d=1-(1-p_d)^{3^L}.
\end{align}
For the descriptor $d$ corresponding to the branch guaranteed by
Lemma~\ref{lem:dichotomy}, the initial success probability is $p_d\ge2^{-L-1}$. 
Hence, for all sufficiently large $\lambda$,
\begin{align}
    1-\widetilde p_d = (1-p_d)^{3^L}
    \le e^{-3^L/2^{L+1}}
    \le\frac1{4L}.
\end{align}
When applied to $|0^L\rangle$, each update stays within the span of $U_{d,j}|0^L\rangle$ and $PU_{d,j}|0^L\rangle$. Its component in the target subspace therefore retains its direction. 
For the descriptor $d$ of branch $i$, postselection on $E$ still prepares $\ket{\psi_i}$ on $A$.

Each step uses a constant number of copies of the preceding circuit and its inverse. 
Thus, after $L$ steps, $\widetilde U_d$ has $3^L=2^{O(L)}$ copies of the original circuit, retains polynomial width, and can still be printed by an $R$-independent machine in time $2^{O(L)}$. 
From this point on, redefine $\Lambda_d$ using the amplified unitaries:
\begin{align}
    \Lambda_d\coloneqq
    (\widetilde U_d|0^L\rangle\langle0^L|
    \widetilde U_d^\dagger)^{\otimes L}.
\end{align}
For the descriptor $d$ guaranteed by Case~2, both $\widetilde p_d$ and $\langle\psi_d|\rho_k|\psi_d\rangle$ are at least $1-\frac{1}{4L}$. Therefore, in this case there exists a candidate satisfying
\begin{align}\label{eq:amplified-completeness}
    \Tr(\Lambda_d\omega_{\rho_{k}}) = \bigl(\widetilde p_d
    \langle\psi_d|\rho_k|\psi_d\rangle\bigr)^L
    \ge\left(1-\frac1{4L}\right)^{2L}
    \ge\frac9{16}>\frac12,
\end{align}
where the second inequality follows by applying Bernoulli's inequality to $(1-\frac{1}{4L})^L$.
Thus, the amplified projectors satisfy the completeness condition 
of the quantum OR lemma for the \PRS outputs in Case~2.

\subsection{The adversary and its analysis}
\paragraph{The adversary.}
The adversary can access the quantum channel oracle $\mathcal{O}$ and can perform intermediate measurements and discard registers. 
\begin{itemize}
    \item \textbf{Input}: $3L$ copies of a test state $\rho$.
    \item \textbf{Procedure}:
    \begin{enumerate}
        \item Perform $L$ purity tests on disjoint pairs of the first $2L$ copies. If any test returns the antisymmetric outcome (labeled 0), then return $b=1$.
        \item Otherwise, 
            \begin{enumerate}
                \item Prepare $\omega_\rho$ from the remaining $L$ copies, together with an $(L+3)$-qubit counter register initialized to $\ket{0}_C$, a flag register initialized to $\ket{0}_f$, and an index register initialized to $\ket{+}^{\otimes L}_J$ for the OR test, together with any required zero-initialized work qubits.
                \item Send the classical command $(\langle M\rangle,\mathrm{bin}(t))$, the prepared query registers, and an abort qubit initialized to $\ket{0}$ to $\mathsf{QPSPACE}$, where $M$ prints the unitary OR-test circuit within $t$ steps.
                \item Measure the counter register. If the outcome is nonzero, return $b=1$; otherwise, return $b=0$.
            \end{enumerate}
    \end{enumerate}
    \item \textbf{Output}: $b \in \{0, 1\}$, where $b=0$ denotes a Haar-random state and $b=1$ denotes a \PRS.
\end{itemize}
The above attack can distinguish \PRS outputs from Haar-random states in polynomial time with a single query to $\mathsf{QPSPACE}$.

\begin{lemma}
    The adversary $\mathcal{A}$ that distinguishes \PRS outputs from Haar-random states can be implemented by a QPT algorithm with a single query to $\mathsf{QPSPACE}$.
\end{lemma}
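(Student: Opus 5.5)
The plan is to separate the work into the part done outside the oracle (purity tests, state preparation, final measurement) and the part done inside one $\mathsf{QPSPACE}$ call (the OR test). The main task is to exhibit a single fixed, $R$-independent Turing machine $M$ of constant description length and a polynomial-length binary time bound $t$ such that $M$ prints, in at most $t$ steps, an oracle-free unitary circuit of polynomial width implementing the unitary version of the quantum OR test from \cite{Harrow_Sequential_2017}.

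\textbf{Outer steps are QPT.} Step 1 consists of $L$ SWAP tests on $2L$ copies, costing $O(Lm)$ gates. Step 2(a) only appends $|0^{L-m}\rangle$ ancillas to $L$ copies, initializes a counter, a flag, and $J$ in $\ket{+}^{\otimes L}$ (Hadamards), plus polynomially many work qubits. Step 2(c) is a computational-basis measurement of the $(L+3)$-qubit counter. Since $L$ is a polynomial from Lemma~\ref{lem:oracle_free}, all of these are polynomial size. The command $(\langle M\rangle,\mathrm{bin}(t))$ has length $O(1)+O(L)$, so writing it and making the single $\mathsf{QPSPACE}$ call is polynomial time.

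\textbf{Printing the OR-test circuit.} I would follow the unitary implementation in \cite[Appendix~A]{Chen_Power_2025}: the OR test applies, controlled on the index register $J$ ranging over descriptors $d\in\{0,1\}^L$, the two-outcome measurement $\{\Lambda_d, I-\Lambda_d\}$ coherently, records outcomes in the flag, and increments the counter, with random ordering or repetitions as in Harrow--Lin--Montanaro. Each measurement $\Lambda_d$ is realized unitarily as $(\widetilde U_d^\dagger)^{\otimes L}$, then a test of whether all $L\cdot L$ qubits are zero (computed into the flag via a multi-controlled gate), then $\widetilde U_d^{\otimes L}$. Controlling on $J$ is done by looping over all $2^L$ values of $d$ and applying the $d$-controlled version of each block. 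By Lemma~\ref{lem:oracle_free}, $U_d$ is printable in time $2^{O(L)}$. The amplification recursion in the previous subsection expresses $\widetilde U_d$ via $3^L$ copies of $U_d, U_d^\dagger$ together with $R_s, R_t$, each of which is an explicit polynomial-size circuit, so $M$ prints $\widetilde U_d$ in $2^{O(L)}$ time. Adding controls and repeating over $2^L$ indices and $O(2^{L})$ OR-test rounds keeps the total at $2^{O(L)}$. Width stays polynomial, namely $L^2$ data qubits plus counter, flag, index and work qubits. Thus $t=2^{cL}$ for a suitable constant $c$ works, and $\mathrm{bin}(t)$ has $O(L)$ bits.

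\textbf{Main obstacle.} The delicate point is checking that the printed circuit is \emph{oracle-free and $R$-independent}, and acts exactly on the data register sent. $R$-independence comes from Lemma~\ref{lem:oracle_free}, since the descriptor enumeration hardwires candidate replies rather than querying $\mathcal R$. Any $\mathsf{QPSPACE}$ calls inside $\Gen$ are handled because the commands are part of $d$ and are inlined as the circuits they print. I also need the intermediate measurements of the OR test to be deferred into the counter register, so that a single unitary suffices and the final counter measurement outside the oracle reproduces the OR test's acceptance statistics. Finally, I need $M$ to be uniform in $\lambda$: it reads $1^\lambda$ (hardwired in the command), computes $L(\lambda)$, and proceeds. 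Since the generator is fixed and uniform, its description is a constant inside $M$.
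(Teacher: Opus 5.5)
Your route matches the paper's: purity tests and bookkeeping done outside the oracle in polynomial time, and a single $R$-independent machine $M$ that prints, in time $2^{O(L)}$, a polynomial-width circuit built from the coherent zero-tests $(\widetilde U_d^{\dagger})^{\otimes L}$, all-zero check, $\widetilde U_d^{\otimes L}$, controlled on $J=d$. The gap is the step you flag as the main obstacle but only assert: that one flag qubit and an $(L+3)$-qubit counter let a single unitary reproduce the OR test's acceptance probability.

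This is not an instance of deferred measurement. A counter merges histories with equal counts, so it is not a record of the outcomes, and a fresh flag per test would need exponential width. The missing argument, which the paper supplies, runs as follows. For each accepting projector $Q$, compute the outcome into the flag, increment the counter controlled on the flag, and then uncompute the flag. Each test then acts as $|\phi\rangle|c\rangle_C\mapsto Q|\phi\rangle|c+1\rangle_C+(I-Q)|\phi\rangle|c\rangle_C$ and returns the flag to $|0\rangle$. The counter only increases, and $L+3$ bits cannot wrap around during the $2^{L+2}$ tests. Hence the counter-zero component is the unique all-reject history, namely $(\Delta(I-\Pi))^{2^{L+1}}$ applied to $\omega_\rho\otimes|+\rangle\langle+|_J^{\otimes L}$. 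Its weight is exactly the OR test's rejection probability. Interference among nonzero counts is irrelevant, since only their total weight is used.

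Your description of the test also leaves out half of it. The OR test here alternates, for $2^{L+1}$ rounds, the measurements with accepting projectors $\Pi=\sum_{d}\Lambda_d\otimes|d\rangle\langle d|_J$ and $I-\Delta$, where $\Delta=I\otimes|+\rangle\langle+|_J^{\otimes L}$. It is not a random ordering of the $\Lambda_d$. You implement only the $\Pi$ part, and repeating that projective measurement yields nothing new. A single $\Pi$ test on $\omega_\rho\otimes|+\rangle\langle+|_J^{\otimes L}$ accepts with probability $2^{-L}\sum_d\Tr(\Lambda_d\omega_\rho)$. In Case 2 this is only guaranteed to be about $2^{-L}$, far below the constant completeness you need. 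The $I-\Delta$ test (Hadamards on $J$, nonzero check into the flag, Hadamards back) must also appear in the printed circuit and feed the counter. With these two points added, your proof coincides with the paper's.
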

\begin{proof}
    We check that the unitary OR construction of~\cite[Remark~5.3 and Appendix~A]{Chen_Power_2025} applies in our setting and that all remaining operations of $\mathcal{A}$ have polynomial size.

    By Lemma~\ref{lem:oracle_free} and the amplification above, each measurement $\{\Lambda_d,I-\Lambda_d\}$ has a coherent implementation of polynomial width using $\widetilde U_d^{\otimes L}$, its inverse, and an all-zero test.
    These circuits can be printed uniformly from $d$ in time $2^{O(L)}$ by an $R$-independent procedure, which gives a polynomial-length description of the entire measurement family.
    To specify the accepting outcomes, let
    \begin{align}
        \Pi\coloneqq\sum_{d\in\{0,1\}^L}\Lambda_d\otimes|d\rangle\langle d|_J,
        \qquad
        \Delta\coloneqq I\otimes|+\rangle\langle+|_J^{\otimes L}.
    \end{align}
    The OR test alternates the measurements with accepting projectors $\Pi$ and $I-\Delta$. For $2^L$ candidates and $\epsilon=1/2$, the construction in~\cite{Harrow_Sequential_2017} uses $2^{L+1}$ rounds.
    For each accepting projector, we coherently compute the outcome into the flag, increment the counter when the flag is $1$, and uncompute the flag. An $(L+3)$-qubit counter prevents overflow throughout these $2^{L+2}$ tests.
    Since the counter only increases, its zero component is exactly the branch in which every test rejects, with operator $(\Delta(I-\Pi))^{2^{L+1}}$ on the input $\omega_\rho\otimes|+\rangle\langle+|^{\otimes L}_J$. Thus, accepting precisely when the final counter is nonzero reproduces the OR test's acceptance probability.
    The entire oracle-free circuit has polynomial width and can be printed by an $R$-independent classical Turing machine $M$ in time $2^{O(L)}$.

    For the fixed \PRS generator, the adversary can construct $\langle M\rangle$ from this printing procedure and prepare a sufficiently large binary time bound in polynomial time, and hence execute the entire circuit with a single query to $\mathsf{QPSPACE}$.
    All remaining operations, namely the purity tests, preparation of the query registers, and final counter measurement, have polynomial size.
\end{proof}
Now we are ready to prove our main theorem. The remaining task is to analyze the acceptance probability of each state. 

\begin{proof}[Proof of Theorem~\ref{thm:main-2}] 
    It remains to show that the QPT adversary described above has a non-negligible distinguishing advantage between the generator's output and a Haar-random state.

    The adversary's attack has two stages, a purity test and an OR test. Let $p_{\mathrm{purity}}(\rho)$ be the probability that at least one purity test returns the antisymmetric outcome $0$, and let $p_{\mathrm{OR}}(\rho)$ be the probability that the OR test returns $1$ on $\omega_\rho$. Since a single purity test returns the antisymmetric outcome with probability $(1-\Tr(\rho^2))/2$, we have
    \begin{align}
        p_{\mathrm{purity}}(\rho)
        =1-\left(\frac{1+\Tr(\rho^2)}{2}\right)^L.
    \end{align}
    The two stages use disjoint copies. The probability that the adversary accepts the given state, that is, decides that it is a \PRS output, is
    \begin{align}
        \Pr[\mathcal A^{\mathcal O}(1^\lambda,\rho^{\otimes3L})=1]
        =p_{\mathrm{purity}}(\rho)
        +\bigl(1-p_{\mathrm{purity}}(\rho)\bigr)p_{\mathrm{OR}}(\rho).
    \end{align}
    We analyze the adversary's acceptance probabilities on \PRS and Haar-random states by bounding the probability of each stage.

    \paragraph{Acceptance on \PRS.}
    First, fix $R$ and $k$. Apply Lemma~\ref{lem:dichotomy} with $N=2^L$ and $\delta=\frac{1}{8L}$. Depending on the purity of the mixed output state $\rho_k$ of $\Gen^\mathcal{O}$, there are two cases.

    If $1-\Tr(\rho_k^2)>\frac{1}{8L}$, then
    \begin{align}
        p_{\mathrm{purity}}(\rho_k)
        &=1-\left(\frac{1+\Tr(\rho_k^2)}{2}\right)^L \\[-2pt]
        &>1-\left(1-\frac{1}{16L}\right)^L
        \ge 1-e^{-1/16}>\frac{1}{17}.
    \end{align}
    In this case, the state's purity is already sufficiently low, so the purity test alone gives a sufficiently large acceptance probability.

    On the other hand, if $1-\Tr(\rho_k^2)\le\frac{1}{8L}$, we are in Case~2. Applying Eq.~\eqref{eq:amplified-completeness}, some candidate satisfies $\Tr(\Lambda_d\omega_{\rho_k})>1/2$. By the quantum OR lemma with $\epsilon=1/2$,
    \begin{align}
        p_{\mathrm{OR}}(\rho_k)\ge\frac{(1-\epsilon)^2}{7}=\frac1{28}.
    \end{align}
    The adversary's acceptance probability is at least $p_{\mathrm{OR}}(\rho_k)$, and hence at least $1/28$.

    Thus, for every fixed $R$ and $k$, the adversary accepts $\rho_k$ with probability at least $1/28$. Averaging over a uniformly random key $k$ gives
    \begin{align}
        \Pr_{k\leftarrow\{0,1\}^\lambda}
        [\mathcal A^{\mathcal O}(1^\lambda,\rho_k^{\otimes3L})=1]
        \ge\frac1{28}.
    \end{align}

    \paragraph{Acceptance on Haar-random states.}
    Let $\rho_\phi=|\phi\rangle\langle\phi|$. Since a Haar-random state is pure, ${\Tr}(\rho_\phi^2)=1$ and $p_{\mathrm{purity}}(\rho_\phi)=0$. It therefore suffices to bound the OR test's acceptance probability.

    We use the basic Haar-moment identity~\cite{Harrow_the_2013}. For any fixed pure state $|\psi\rangle$,
    \begin{align}
        \mathbb E_{\phi\leftarrow\mu_m}|\langle\psi|\phi\rangle|^{2L}
        =\binom{2^m+L-1}{L}^{-1}.
    \end{align}
    Thus, each candidate state $|\psi_d\rangle$ produced by each candidate circuit has a small average overlap with a Haar-random state. For each descriptor $d$ with $\widetilde p_d>0$,
    \begin{align}
        \mathbb E_{\phi\leftarrow\mu_m}{\Tr}(\Lambda_d\omega_{\rho_\phi})
        &=\mathbb E_{\phi\leftarrow\mu_m}
          \bigl(\widetilde p_d|\langle\psi_d|\phi\rangle|^2\bigr)^L \\[-2pt]
        &=\widetilde p_d^{\,L}\binom{2^m+L-1}{L}^{-1}
         \le\binom{2^m+L-1}{L}^{-1}.
    \end{align}
    The same upper bound holds if $\widetilde p_d=0$, since then ${\Tr}(\Lambda_d\omega_{\rho_\phi})=0$.

    Apply the quantum OR lemma to the averaged Haar input $\mathbb E_{\phi\leftarrow\mu_m}\omega_{\rho_\phi}$. By linearity of the acceptance probability, we obtain
    \begin{align}
        \Pr_{\phi\leftarrow\mu_m}
        [\mathcal A^{\mathcal O}(1^\lambda,\rho_\phi^{\otimes3L})=1]
        &=\mathbb E_{\phi\leftarrow\mu_m}p_{\mathrm{OR}}(\rho_\phi) \\[-2pt]
        &\le4\sum_{d\in\{0,1\}^L}
          \mathbb E_{\phi\leftarrow\mu_m}\Tr(\Lambda_d\omega_{\rho_\phi}) \\[-2pt]
        &\le4\cdot2^L\binom{2^m+L-1}{L}^{-1} \\
         &\le4\left(\frac{2L}{2^m}\right)^L 
         \le2^{2-2L}
         \le2^{-\lambda},
    \end{align}
    where the third inequality follows from $\binom{2^m+L-1}{L} \ge\left(\frac{2^m}{L}\right)^L$.
    Since $L$ is polynomial and $m=\omega(\log\lambda)$, we have $2^m\ge8L$ for all sufficiently large $\lambda$.

    Thus, for every possible \PRS generator $\Gen^{\mathcal O}$ and every $R$, the adversary distinguishes its output from a Haar-random state with non-negligible advantage, at least $1/28-2^{-\lambda}$. Hence \PRS with output length $\omega(\log \lambda)$ do not exist relative to $\mathcal O$.
\end{proof}

\section*{AI use disclosure}
OpenAI GPT-6 Astra was used to prepare an initial written draft of the proofs based on the authors' arguments. The authors subsequently revised and refined the proofs themselves, while GPT-6 Astra was also used to polish the written presentation. The authors take full responsibility for the content of this work.

\section*{Acknowledgments}
C.O. and V.S. were supported by the National Research Foundation of Korea Grants (No. RS-2024-00431768 and No. RS-2025-00515456) funded by the Korean government (Ministry of Science and ICT (MSIT)) and the Institute of Information \& Communications Technology Planning \& Evaluation (IITP) Grants funded by the Korean government (MSIT) (No. RS-2024-00437284, No. IITP-2025-RS-2025-02283189 and No. IITP-2025-RS-2025-02263264) by Global Partnership Program of Leading Universities in Quantum Science and Technology (RS-2025-08542968) through the National Research Foundation of Korea~(NRF) funded by the Korean government (Ministry of Science and ICT(MSIT)).

\bibliographystyle{alpha}
\bibliography{reference}

\end{document}

%% file: macro.tex
\ifnum\draft=1
\newcommand{\minki}[1]{\textcolor{blue}{$\langle\langle$Minki: #1$\rangle\rangle$}}
\else
\newcommand{\minki}[1]{}
\fi

\ifnum\draft=1
\newcommand{\vaughn}[1]{\textcolor{cyan}{$\langle\langle$Vaughn: #1$\rangle\rangle$}}
\else
\newcommand{\vaughn}[1]{}
\fi

\newcommand{\OWF}{\ensuremath{\mathsf{OWFs}}\xspace}

\newcommand{\Gen}{\ensuremath{G}\xspace}
\newcommand{\PRS}{\ensuremath{\mathsf{PRS}}\xspace}

\newtheorem{theorem}{Theorem}[section]
\newtheorem{lemma}[theorem]{Lemma}
\newtheorem{corollary}[theorem]{Corollary}

\newtheorem{definition}[theorem]{Definition}